\documentclass[conference]{IEEEtran} 
\usepackage{stfloats}
\usepackage{graphicx}
\usepackage[most]{tcolorbox}
\tcbuselibrary{skins, breakable}
\usepackage{circledsteps}
\usepackage[probability, sets, operators, primitives]{cryptocode}
\usepackage{soul}
\usepackage{annotate-equations}
\usepackage{multicol}
\usepackage{multirow}
\usepackage{cancel}
\usepackage{pifont}
\usepackage{amsthm}
\usepackage{tabularx}
\newtheorem{theorem}{Theorem}
\usepackage{colortbl}

\usepackage{booktabs}
\usepackage[table]{xcolor}
\usepackage{array}
\usepackage{float}
\usepackage{subcaption}
\usepackage{hyperref}
\usetikzlibrary{shapes.geometric, arrows.meta, positioning, calc, fit, decorations.pathreplacing}

\newcommand{\AtkZero}{Attack~0}
\newcommand{\AtkOne}{Attack~1}
\newcommand{\AtkTwo}{Attack~2}
\newcommand{\AtkThree}{Attack~3}
\newcommand{\AtkZeroTitle}{Pre-MPC Information Leakage}
\newcommand{\AtkOneTitle}{Opening Fairness Violation}
\newcommand{\AtkTwoTitle}{On-Chain/Off-Chain Correlation}
\newcommand{\AtkThreeTitle}{Invalid-Input Griefing}

\begin{document}

\title{Cryptographic Security Is Not Enough: Privacy Gaps in the Renegade Decentralized Dark Pool}

\author{
\IEEEauthorblockN{
Prerna Arote\IEEEauthorrefmark{1},
Adrian Saiz\IEEEauthorrefmark{2},
Oriol Saguillo\IEEEauthorrefmark{1},
Lucianna Kiffer\IEEEauthorrefmark{1}
}

\IEEEauthorblockA{\IEEEauthorrefmark{1}IMDEA Networks Institute}

\IEEEauthorblockA{\IEEEauthorrefmark{2}EPFL}

\IEEEauthorblockA{
\IEEEauthorrefmark{1}\{prerna.arote, oriol.saguillo, lucianna.kiffer\}@networks.imdea.org\\
\IEEEauthorrefmark{2}adrian.saizdepedro@epfl.ch
}
}
\maketitle

\begin{abstract} 
Dark pools are designed to provide pre-trade privacy, liveness, and post-trade confidentiality --- concealing order flow before execution and limiting information leakage after. Decentralized dark pools, such as Renegade, aim to replicate these properties without custodial risk, using secure multi-party computation (MPC) and zero-knowledge proofs for private order matching and verifiable settlement.

We show that Renegade's cryptographic guarantees do not deliver these dark pool properties in practice. MPC-with-abort ensures correctness but not fairness: a party may learn the match result and abort without penalty, breaking pre-trade privacy. We demonstrate that the protocol's discovery layer further leaks trading intent before MPC even begins, and that sustained probing via selective abort can probabilistically reconstruct counterparty order history, threatening post-trade confidentiality. We also show that the absence of input-consistency checks prior to MPC execution enables a griefing attack using invalid state commitments requiring no real token holdings that continuously locks honest users' wallets and wastes compute, breaking liveness under sustained conditions.

We further analyze over 700,000 Renegade transactions on Base and probe the P2P layer, finding that the network is effectively centralized: 88\% of traffic routes through a handful of relayers, with only four nodes sustaining the P2P layer. Since relayers hold their users' wallet state in plaintext, this concentration means the system operates as a centralized orderbook in practice --- reproducing off-chain the information asymmetry that dark pools are designed to eliminate.

Together, our results show that cryptographic privacy does not imply dark pool security: pre-trade privacy, liveness, and post-trade confidentiality each require additional protocol-level guarantees beyond MPC correctness.

\end{abstract}

\section{Introduction}
\label{sec:intro}

Cryptocurrency markets process billions of dollars in daily trading volume, yet their underlying trading infrastructure remains fundamentally limited. Historically, most liquidity has been concentrated in centralized exchanges (CEXes)~\cite{coingecko2023report}, which offer efficient execution and confidentiality from other traders but require users to relinquish custody of their assets. This custodial model has repeatedly led to failures, including high-profile exchange collapses~\cite{gemini_mtgox, ftx_ray_declaration} and instances of internal price manipulation~\cite{gandal2018price}.

\textbf{Decentralized exchanges} (DEXes) aim to eliminate custodial risk by enabling trustless, non-custodial trading via smart contracts~\cite{hagele2024centralized}. However, removing intermediaries does not eliminate risk. The transparency of blockchain systems introduces new vulnerabilities: pending transactions and orders are publicly visible prior to execution, exposing traders to adversarial behaviors such as \textit{miner extractable value (MEV)}, where block producers reorder or front-run transactions for profit, as well as pre-trade information leakage and post-trade traceability~\cite{daian2020flash, heimbach2023ethereum, yang2025decentralization}.

These vulnerabilities share a common root cause: \textbf{public visibility of trading activity}. An alternative direction seeks to eliminate information leakage altogether.

\smallskip
\noindent\textbf{Decentralized dark pools.} In traditional financial markets, \textbf{dark pools} enable large trades to be executed without revealing order flow, reducing price impact and strategic exploitation. In contrast, decentralized finance exposes trading activity in real time, enabling adversaries to react to observed order flow. This gap has led to the emergence of \textbf{blockchain-based dark pools}, which aim to combine the non-custodial guarantees of DEXes with the privacy properties of traditional dark pools. By concealing order information during execution, these systems aim to eliminate the ``transparency tax'' and protect traders from adversarial strategies.

Dark pools differ in their underlying privacy architectures, each reflecting a distinct trade-off. Trusted execution environment (TEE)-based approaches (e.g., Tristero, Silhouette) achieve low latency but rely on hardware trust and remain vulnerable to side-channel attacks~\cite{tristero}\cite{silhouette}. Zero-knowledge proof (ZKP)-based systems (e.g., Penumbra, Panther) provide strong privacy guarantees but require external coordination for matching~\cite{penumbra,pantherprotocol}. Fully homomorphic encryption (FHE)-based designs (e.g., Singularity) offer strong theoretical privacy but remain impractical due to high computational overhead \cite{singularity}. 
Multi-party computation (MPC)-based approaches~\cite{da2022all, cartlidge2019mpc, da2022kicking, baum2021p2dex, govindarajan2022privacy, renegade} offer a different point in this design space, enabling private order matching without trusted hardware. Existing MPC-based systems, however, differ in their trust assumptions and execution architectures.
Other privacy-preserving DEXs use MPC for different purposes; for example, COMMON~\cite{garreta2023common} uses MPC for distributed decryption rather than matching and is outside our scope.

\textbf{Renegade} is one of the most complete public design of an end-to-end MPC-based decentralized dark pool, and the most mature one to have been deployed with substantial publicly observable trading history. Its protocol is documented in a public specification~\cite{renegade} and backed by an open-source implementation of the full stack, including the MPC matching engine, the proof system, and the settlement contracts. Between September 2024 and April 2026, its deployment on Arbitrum and Base processed over \$340M in cumulative trading volume~\cite{defillama_renegade}.

In May 2026, a vulnerability in its smart contract was exploited by white-hat attackers, resulting in the withdrawal of funds from affected accounts; this incident is unrelated to the privacy vulnerabilities we study \cite{cointelegraph_renegade_exploit,tangem_renegade_exploit}.
Renegade combines secure multi-party computation (MPC) and zero-knowledge succinct non-interactive arguments of knowledge (zkSNARKs) to enable private, non-custodial trade execution~\cite{renegade}: orders are matched via MPC over secret-shared inputs, and trades are settled on-chain using collaboratively generated proofs. By matching orders off-chain and settling on-chain, Renegade aims to mitigate MEV and front-running while providing end-to-end privacy across both pre-trade and post-trade phases. 


\smallskip
\noindent\textbf{The gap.}
Renegade is, to our knowledge, the only end-to-end MPC-based decentralized dark pool deployed in production.
Section~\ref{sec:mpc_dark_pools} compares Renegade with prior MPC-based systems and distinguishes vulnerabilities inherent to permissionless end-to-end architectures from those arising from Renegade's specific design choices.
However, cryptographic correctness alone does not imply that a decentralized dark pool achieves its intended privacy guarantees in practice.
While MPC ensures that matching is computed correctly if the protocol completes, it does not by itself address information leakage before execution, strategic behavior during protocol execution, or protocol availability under adversarial participation.
The current protocol fails each of these at distinct points: the handshake layer leaks order intent and wallet identifiers before MPC begins; the output reconstruction phase allows one party to receive the match result and abort before the other does; and the absence of pre-MPC input consistency checks allows an adversary to force repeated aborts at negligible cost, preventing honest participants from completing matches.

\medskip
\noindent\textbf{Our work.}
We formally analyze Renegade and show that its cryptographic correctness does not prevent an adversary with standard protocol capabilities from compromising privacy, fairness, and liveness.
We identify four attacks spanning the protocol timeline. \emph{Handshake probing} reveals active token pairs and wallet identifiers before MPC begins, violating pre-trade privacy.
\emph{Selective abort} at output reconstruction allows an adversary to learn the match result before the honest party and abort selectively, providing a zero-cost option on trades and enabling complete orderbook extraction through repeated queries.
\emph{On-chain/off-chain correlation} links the wallet identifiers exposed during probing to funding addresses, yielding fully deanonymized trading profiles.
Finally, \emph{invalid-input griefing} allows an adversary to complete a valid handshake but supply inconsistent MPC inputs, forcing deterministic aborts that lock honest wallets and waste non-reusable preprocessing at a fraction of the honest party's cost and with negligible locked capital.

\medskip
\noindent\textbf{Empirical and benchmark evidence.}
To bound the real-world attack surface, we analyze over 700,000 transactions on the Renegade Base deployment and probe the P2P layer. At the time of our measurements, the mainnet consisted of just four nodes in a single AWS availability zone, meaning attacks targeting the entire network require only four concurrent sessions. On-chain data further shows that relay-to-user mappings are visible in the public transaction record, making individual users straightforward to identify and target. Benchmarks confirm the attacks are practical: a full network orderbook scan via handshake probing completes in minutes, the selective abort attack reveals complete match economics in under 2 seconds, and the griefing attack achieves a $487\times$ time asymmetry, destroying 20\,GB of non-reusable preprocessing per 1.6-second attack cycle.

Together, these results show that cryptographic correctness is not sufficient for dark pool privacy guarantees.

\smallskip
\noindent\textbf{Our contributions.}
\begin{itemize}

\item \textbf{Security model.} We formalize the security properties a decentralized dark pool must provide, distinguishing core privacy properties (pre-trade privacy, output fairness, post-trade privacy) from operational requirements (input binding, liveness).

\item \textbf{Attacks.} We prove four attacks on Renegade, each violating one or more of these properties, spanning the full protocol timeline from counterparty discovery through settlement.

\item \textbf{Empirical analysis.} We analyze over 700,000 on-chain transactions and probe the P2P network, characterizing the real-world attack surface and network topology.

\item \textbf{Benchmarks.} We benchmark handshake proof generation, MPC session cost, and ZKP circuit evaluation, establishing the cost parameters and asymmetries governing each attack.

\item \textbf{Mitigations.} We discuss protocol-level defenses for each attack class and their tradeoffs.

\end{itemize}

\noindent\textbf{Organization.}
Section~\ref{sec:background} presents background on Renegade. Section~\ref{sec:properties} formalizes the security model and properties. Section~\ref{sec:attacks} presents the four attacks with formal proofs. Section~\ref{sec:empirical} presents the empirical analysis. Section~\ref{sec:benchmarks} benchmarks each attack component. Section~\ref{sec:mpc_dark_pools} compares Renegade with prior MPC-based dark pools. Section~\ref{sec:mitigations} discusses mitigations, and we conclude with related work in Section~\ref{sec:related_work}.

\section{Background}
\label{sec:background}
We begin by describing the Renegade protocol in detail. 

\subsection{Renegade Protocol Overview}  %

Renegade~\footnote{The original Renegade whitepaper is no longer hosted on the project's website; we cite the Wayback Machine snapshot from 12 April 2026: \url{https://web.archive.org/web/20260412233320/https://whitepaper.renegade.fi/}.}is a decentralized dark pool protocol designed for \textbf{privacy-preserving trading} 
of large-scale order execution. Unlike traditional decentralized exchanges, where orders and balances are publicly visible on-chain, Renegade aims to conceal order flow by separating trading intent from execution and settlement, combining off-chain secure computation with on-chain cryptographic verification.

The protocol consists of three main components---users, relayers, and a smart contract---detailed in Section~\ref{subsec:sysmodel_entities}. \emph{Users} hold private wallets encoding balances and orders; \emph{relayers} handle counterparty discovery, matching, and proof generation, with plaintext access to the wallets they manage; and the \emph{smart contract} is a global verifier storing only cryptographic commitments to wallet state.

Trading proceeds as follows. Users update their \textbf{wallet states on-chain} by submitting cryptographic commitments and zero-knowledge proofs that validate deposits, withdrawals, and order updates. Relayers then perform \textbf{off-chain matching} via secure MPC to identify compatible counterparties, with the goal of keeping order details such as price and quantity private during execution. Once a match is found, relayers jointly generate a zero-knowledge proof attesting to correct execution and submit it to the smart contract, which verifies correctness and updates the global state. To coordinate matching, relayers communicate over a peer-to-peer network using gossip-based message propagation and direct request-response channels.

In summary, Renegade's design goal is private trade execution via commitment-based state representation, MPC-based off-chain matching, and on-chain cryptographic verification. 
What defines private trade execution and whether Renegade lives up to this goal in practice is the subject of this paper.

\subsection{System Model and Entities}
\label{subsec:sysmodel_entities}

The Renegade protocol 
consists of three main components: users, relayers (clusters), and the smart contract.

\textbf{Users} are traders who wish to exchange one token for another at a specified price. To do so, a user places \emph{limit orders} specifying a token pair, direction (buy or sell), a price limit, and a quantity. The user's goal is to find a counterparty with a compatible order and settle the trade atomically, without revealing the order details prior to execution. Users maintain a \emph{private wallet} encoding their current token balances, outstanding orders, and authorization data. This state is not stored in plaintext on-chain; instead, it is represented via cryptographic commitments to preserve privacy. Users may either operate their own relayer or delegate wallet management to an external relayer.

\textbf{Relayers} are specialized nodes to which users delegate wallet management and order execution, since end-users are not expected to remain online to perform complex computations. Relayers have plaintext access to the wallets they manage, enabling them to observe balances and execute orders, making them trusted intermediaries. 
To ensure availability and scalability, relayers are organized into \textbf{relayer clusters}: replicated, fault-tolerant groups of nodes managing the same set of wallets. 

Matching occurs in one of two modes. When two compatible orders reside within the same cluster, they can be matched \emph{internally}, without MPC, since the cluster already holds both wallets in plaintext. When compatible orders belong to different clusters, relayers perform \emph{cross-cluster} matching via a maliciously secure two-party MPC protocol, keeping each cluster's order details hidden from the other. In both cases, relayers jointly generate zero-knowledge proofs required for atomic settlement on-chain. A simplified architecture is shown in Figure~\ref{fig:reneagde-architecture}. To mitigate the associated privacy and centralization risks, the protocol allows users to operate independent relayers, though in practice few users do so (see Section~\ref{sec:relay_behavior}).

\begin{figure}[h]
    \centering
    \resizebox{0.8\linewidth}{!}{\input{tikz_figures/connectivity}}
    \caption{Renegade cluster architecture. Relayer clusters hold user wallets and interact with the on-chain smart contract for settlement. Compatible orders across clusters are matched via MPC (\textcolor{red}{$\rightarrow$}); orders within the same cluster are matched internally and the state change committed on-chain (\textcolor{blue}{$\rightarrow$}).}
    \label{fig:reneagde-architecture}
\end{figure}

\textbf{Communication.}\label{sec:network}
Relayers communicate over a peer-to-peer network using three mechanisms: \textit{peer discovery}, allowing relayers to locate and connect to each other; \textit{counterparty discovery} via a gossip layer, over which all relayers broadcast order intent and compatible peers respond via a direct channel; and \textit{direct point-to-point communication} for handshake payloads and MPC execution.\footnote{Concretely, the network is built on libp2p~\cite{libp2p}, a modular networking framework providing transport, authentication, and routing primitives. Peer discovery uses Kademlia~\cite{maymounkov2002kademlia}, a structured distributed hash table that allows relayers to locate each other and maintain robust network connectivity. Order gossip uses GossipSub~\cite{gossipsub}, a publish-subscribe protocol that efficiently propagates messages to all participating nodes. Direct communication uses a RequestResponse protocol for authenticated point-to-point exchanges.}

\textbf{Smart contract.} The smart contract serves as the global state arbiter. To preserve privacy, it does not store account balances directly; 
instead, it maintains cryptographic commitments to wallet state, described in detail in Section~\ref{sec:trade_life_cycle}.

\begin{figure*}[!t]
    \centering
    \begin{subfigure}[t]{0.45\textwidth}
        \centering
        \resizebox{\linewidth}{!}{\input{tikz_figures/cycle-part1new}}
        \caption{Wallet creation through relayer delegation (steps 1--3).}
        \label{fig:trade_cycle_part1}
    \end{subfigure}
    \hfill
    \begin{subfigure}[t]{0.45\textwidth}
        \centering
        \resizebox{\linewidth}{!}{\input{tikz_figures/cycle-part2new}}
        \caption{Counterparty discovery through settlement (steps 4--8).}
        \label{fig:tradecycle_part2}
    \end{subfigure}
    \caption{The Renegade trade lifecycle.
    \textbf{(1)}~User creates a wallet and commits it on-chain.
    \textbf{(2)}~User updates wallet state (deposit or order) via commit-and-prove, revealing old nullifiers.
    \textbf{(3)}~User delegates the match key to a relayer, which manages the wallet in plaintext.
    \textbf{(4)}~Relayer broadcasts order intent over gossip and identifies a compatible counterparty.
    \textbf{(5)}~Relayers exchange handshake payloads, each committing to their order and wallet state via ZK proof.
    \textbf{(6)}~Relayers run SPDZ-style 2PC to evaluate the matching function and open the result~$M$.
    \textbf{(7)}~Relayers submit proofs on-chain; contract marks match nullifiers spent and records encrypted trade notes.
    \textbf{(8)}~Each relayer independently decrypts its note and submits an updated wallet state, consuming all remaining nullifiers.}
    \label{fig:trade_lifecycle}
\end{figure*}

\subsection{Trade Lifecycle}
\label{sec:trade_life_cycle}

\paragraph{State Representation and Commitments}

Renegade represents user state using commitments over structured wallet data~\cite{renegade}.

\paragraph{Wallet}
A wallet encodes balances, orders, fee approvals, and authorization keys. Formally, for some elliptic curve $\mathbb{G}$ over a prime field $\mathbb{F}$, a wallet $W$ is formally defined as a tuple containing the local state of a user:
\[
W = (B, O, F, K, r) \in \mathbb{F}^{2M_B} \times \mathbb{F}^{8M_O} \times \mathbb{F}^{5M_F} \times \mathbb{G}^{4} \times \mathbb{F},
\]
where $B$ records current token holdings; $O$ is the set of active limit orders the user wishes to execute, each specifying a token pair, direction, price, and quantity; $F$ contains fee authorizations granting the relayer permission to deduct a fee upon a successful match; $K$ is a hierarchy of public keys governing different levels of authority (described below); and $r$ is secret randomness used to derive commitments and nullifiers.

\paragraph{Key hierarchy}
Renegade uses a hierarchical key structure
\[
K = (\mathrm{pk}^{\mathrm{root}}, \mathrm{pk}^{\mathrm{match}}, \mathrm{pk}^{\mathrm{settle}}, \mathrm{pk}^{\mathrm{view}}),
\]
to restrict relayer capabilities. The root key authorizes deposits, withdrawals, and order updates, while the remaining keys enable matching, settlement, and state decryption. The root secret key is derived from the user’s Ethereum keypair, and remains known only to the user (Figure~\ref{fig:key_hierarchy}).

\begin{figure}[t] 
\centering                         \resizebox{0.75\linewidth}{!}{\begin{tikzpicture}[
    >=Stealth,
    thick,
    node distance = 0.8cm and 2cm, 
    keynode/.style={font=\LARGE\sffamily, anchor=center},
    labelnode/.style={font=\scriptsize\sffamily, fill=white, inner sep=1pt},
    double_arrow/.style={->, double, double distance=3pt, line width=0.8pt}
]

    \node[keynode] (eth_pair) at (4.5,-0.7) {$(\mathrm{pk}^{\text{eth}}, \mathrm{sk}^{\text{eth}})$};
    
    \node[font=\normalsize\sffamily, anchor=west] (act_root) at (3.5, -2) {
        $H\left( \text{sign}_{\mathrm{sk}^{\text{eth}}}(\text{create\_sk\_root}) \right)$
    };
    
    \draw[double_arrow] (eth_pair.south) -- (4.5, -1.8) 
        node[midway, right=0.1cm, font=\small\sffamily\bfseries] {ECDSA sign};

    \node[keynode, left=1cm of act_root] (sk_root) {$\mathrm{sk}^{\text{root}}$};
    \node[keynode, left=2cm of sk_root] (pk_root) {$\mathrm{pk}^{\text{root}}$};

    \node[font=\normalsize\sffamily, anchor=west] (act_match) at (3.5, -3.5) {
        $H\left( \text{sign}_{sk^{\text{root}}}(\text{create\_sk\_match}) \right)$
    };
    
    \node[keynode, left=1cm of act_match] (sk_match) {$\mathrm{sk}^{\text{match}}$};
    \node[keynode, left=2cm of sk_match] (pk_match) {$\mathrm{pk}^{\text{match}}$};

    \node[font=\normalsize\sffamily, anchor=west] (act_settle) at (3.5, -5) {
        $H\left( \text{sign}_{sk^{\text{match}}}(\text{create\_sk\_settle}) \right)$
    };
    
    \node[keynode, left=1cm of act_settle] (sk_settle) {$\mathrm{sk}^{\text{settle}}$};
    \node[keynode, left=2cm of sk_settle] (pk_settle) {$\mathrm{\mathrm{pk}}^{\text{settle}}$};

    \node[font=\normalsize\sffamily, anchor=west] (act_view) at (3.5, -6.5) {
        $H\left( \text{sign}_{sk^{\text{settle}}}(\text{create\_sk\_view}) \right)$
    };
    
    \node[keynode, left=1cm of act_view] (sk_view) {$\mathrm{sk}^{\text{view}}$};
    \node[keynode, left=2cm of sk_view] (pk_view) {$\mathrm{pk}^{\text{view}}$};

    \draw[<-, dashed] (pk_root) to [bend left=20] node[labelnode, above, font=\large] {recover} (sk_root);
    \draw[<-, dashed] (pk_match) to [bend left=20] node[labelnode, above, font=\large] {recover} (sk_match);
    \draw[<-, dashed] (pk_settle) to [bend left=20] node[labelnode, above, font=\large] {recover} (sk_settle);
    \draw[<-, dashed] (pk_view) to [bend left=20] node[labelnode, above, font=\large] {recover} (sk_view);

    \draw[->] (act_root) -- (sk_root);
    \draw[->] (act_match) -- (sk_match);
    \draw[->] (act_settle) -- (sk_settle);
    \draw[->] (act_view) -- (sk_view);

    \draw[->] (sk_root.south) -- (act_match.north west);
    \draw[->] (sk_match.south) -- (act_settle.north west);
    \draw[->] (sk_settle.south) -- (act_view.north west);

    \begin{scope}[on background layer]
        \node[draw, ultra thick, inner sep=15pt, fit=(eth_pair) (pk_view) (act_view)] (boundary) {};
    \end{scope}

\end{tikzpicture}} 
\caption{Key hierarchy in Renegade} \label{fig:key_hierarchy}
\end{figure} %

\begin{figure}[!t]
    \centering
    \resizebox{0.8\linewidth}{!}{\begin{tikzpicture}[
    member/.style={rectangle, draw, minimum height=1.2cm, minimum width=2.0cm, font=\Huge, line width=1pt},
    hash/.style={circle, draw, minimum size=0.9cm, line width=1pt},
    red_path/.style={draw=red!80!black, line width=1.5pt},
    blue_path/.style={draw=blue!80!black, line width=1.5pt},
    normal_line/.style={draw=black, line width=1pt},
    box_red/.style={rectangle, rounded corners=10pt, draw=black, fill=red!20, inner sep=10pt, align=center},
    box_blue/.style={rectangle, rounded corners=10pt, draw=black, fill=cyan!20, inner sep=10pt, align=center}
]

    \node[member] (C0)    at (0,0)     {$C_0$};
    \node[member, fill=red!20] (CW) at (3.0,0)   {$C(W)$};
    \node[member] (C2)    at (6.0,0)   {$C_2$};
    \node[member] (C3)    at (9.0,0)   {$C_3$};
    \node[member] (C4)    at (12.0,0)  {$C_4$};
    \node[member, fill=cyan!20] (CWp) at (15.0,0) {$C(W')$};
    \node[member] (null1) at (18.0,0)  {null};
    \node[member] (null2) at (21.0,0)  {null};

    \node[below=0.3cm of CW, align=center, font=\LARGE]  {Reveal\\Nullifiers};
    \node[below=0.3cm of CWp, align=center, font=\LARGE] {Commit\\to Wallet};

    \coordinate (H1_pos) at ($(C0)!0.5!(CW)$);
    \node[hash] (H1) at (H1_pos |- 0, 2.5) {};

    \coordinate (H2_pos) at ($(C2)!0.5!(C3)$);
    \node[hash] (H2) at (H2_pos |- 0, 2.5) {};

    \coordinate (H3_pos) at ($(C4)!0.5!(CWp)$);
    \node[hash] (H3) at (H3_pos |- 0, 2.5) {};

    \coordinate (H4_pos) at ($(null1)!0.5!(null2)$);
    \node[hash] (H4) at (H4_pos |- 0, 2.5) {};

    \coordinate (H12_pos) at ($(H1)!0.5!(H2)$);
    \node[hash] (H12) at (H12_pos |- 0, 5.0) {};

    \coordinate (H34_pos) at ($(H3)!0.5!(H4)$);
    \node[hash] (H34) at (H34_pos |- 0, 5.0) {};

    \coordinate (Root_pos) at ($(H12)!0.5!(H34)$);
    \node[hash, font=\Huge] (Root) at (Root_pos |- 0, 7.0) {$R_{\text{global}}$};

    \draw[normal_line] (H1) -- (C0);
    \draw[normal_line] (H2) -- (C2);
    \draw[normal_line] (H2) -- (C3);
    \draw[normal_line] (H3) -- (C4);
    \draw[normal_line] (H4) -- (null1);
    \draw[normal_line] (H4) -- (null2);
    \draw[normal_line] (H12) -- (H2);
    \draw[normal_line] (H34) -- (H4);

    \draw[red_path] (Root) -- (H12);
    \draw[red_path] (H12) -- (H1);
    \draw[red_path] (H1) -- (CW);

    \draw[blue_path] (Root) -- (H34);
    \draw[blue_path] (H34) -- (H3);
    \draw[blue_path] (H3) -- (CWp);

    \node[box_red, above left=0.6cm and -1.0cm of H12] (nullBox) {
        \Huge Nullifiers \\ \\
        \Huge $N^{\text{wallet-spend}}(W)$ \\ 
        \Huge $N^{\text{wallet-match}}(W)$ 
    };

    \node[box_blue, above right=2.1cm and -1.0cm of H34] (piBox) {
        \Huge ZKP of commitment $\pi$ 
    };

\end{tikzpicture}}
    \caption{Commitment-based state representation}
    \label{fig:commit-reveal}
\end{figure}

\paragraph{Commitment-based state}
Wallets are stored on-chain via commitments
\[
C(W) := H(\mathcal{H}(B) \| \mathcal{H}(O) \| \mathcal{H}(F) \| \mathcal{H}(K) \| r),
\]
where $H$ is a SNARK-friendly hash and $\mathcal{H}$ denotes Merkle hashing. Commitments are inserted into an append-only Merkle tree, and the contract maintains only the global root (Figure~\ref{fig:commit-reveal}).

\paragraph{State transitions}
Because wallet contents are hidden behind commitments, the contract cannot inspect or validate them directly. Updates from $W$ to $W’$ instead follow a commit-and-prove model: the user submits $C(W’)$ along with a zero-knowledge proof attesting to the validity of the transition and membership of $C(W)$ in the commitment tree.

\paragraph{Nullifiers and replay protection}
Because commitments are hiding, the contract cannot prevent reuse of the same wallet state. %
To address this, each wallet state $W$ derives two unique nullifiers from its commitment $C(W)$ and secret randomness $r$. These nullifiers are revealed and recorded upon consumption, allowing the contract to reject any subsequent attempt to reuse the same wallet state.
The spend nullifier $N^{\textsf{wallet-spend}}(W) := H(C(W) \| r)$ permanently invalidates $W$, used for deposits, withdrawals, and cancellations. The match nullifier $N^{\textsf{wallet-match}}(W) := H(C(W) \| r+1)$ temporarily locks it during an in-progress trade --- preventing the same funds from being pledged to two simultaneous matches, but leaving the state intact for settlement.

A trade proceeds in two phases: an \emph{off-chain matching phase}, in which relayers discover a compatible counterparty and run the MPC protocol, and an \emph{on-chain settlement phase}, in which the resulting proofs are submitted and wallet states are updated. Figure~\ref{fig:trade_cycle_part1} covers wallet setup and relayer delegation (steps 1--3); Figure~\ref{fig:tradecycle_part2} covers counterparty discovery through settlement (steps 4--8). 
Appendix~\ref{appendix:trade_lifecycle} gives a complete description; we focus here on steps 4--6 (counterparty discovery, handshake, and MPC execution), which underlie the attacks analyzed in Section~\ref{sec:attacks}.

\paragraph{Counterparty discovery (step 4)}
When a relayer has an active order, it gossips a bid identifier and its own peer ID to the network. Other relayers receive this message and, if they may have open orders, connect directly to the advertising relayer and initiate a lightweight handshake to learn the token pair of the bid. If the token pair is compatible, the relayers proceed to the full handshake (step 5); otherwise the connection is dropped.

\paragraph{Handshake (step 5)}
To initiate a match, relayer $\mathcal{R}_1$ sends a handshake payload
\[
H_1 = \big(\Pi_{\mathsf{commit}},\; N^{\textsf{wallet-match}}(W'),\; \mathrm{pk}^{\textsf{settle}},\; H_o, H_b, H_f, H_r\big),
\]
where $(H_o, H_b, H_f, H_r)$ are hiding commitments to the selected order, balance, fee, and randomness, and $\Pi_{\mathsf{commit}}$ proves they are well-formed and correspond to a valid, unspent wallet. The wallet-match nullifier $N^{\textsf{wallet-match}}(W')$ identifies the wallet state being committed to the match. The counterparty $\mathcal{R}_2$ verifies the proof and nullifier freshness before responding with its own payload $H_2$.

\paragraph{MPC execution and opening (step 6)}
After exchanging handshakes, $\mathcal{R}_1$ and $\mathcal{R}_2$ run a maliciously secure SPDZ-style two-party protocol over their secret-shared orders, balances, and fees. The circuit evaluates the matching function and produces a match result $M$ if the orders are compatible, along with a zero-knowledge proof $\Pi_{\mathsf{match}}$ binding $M$ to the public handshake commitments. The parties then perform an \emph{opening}: they exchange their shares of $M$, with SPDZ MAC verification ensuring integrity. Upon successful verification, both parties obtain $M$ in the clear.

\paragraph{On-chain settlement (steps 7--8)}
If $M$ indicates a valid match, the relayers encrypt their trade notes and jointly submit the proofs to the smart contract. The contract verifies correctness, marks the wallet-match nullifiers as spent, and inserts the encrypted notes into the commitment tree. Each relayer then independently settles by decrypting its note and submitting an updated wallet state on-chain, consuming the remaining nullifiers.

\section{Security Model and Properties}
\label{sec:properties}

Dark pools aim to enable trading without revealing sensitive order information prior to execution, while ensuring correct and fair matching, limiting selective advantage, providing timely execution, and remaining robust against malicious or economically rational adversaries. 
Although Renegade claims to provide these guarantees~\cite{renegade}, it does not define them explicitly. As the first end-to-end MPC-based dark pool in production, Renegade has no formal security model, and to our knowledge none exists for MPC-based dark pool matching more broadly.

We address this gap by formalizing the security properties under a two-party model: a single matching session between relayers $\mathcal{R}_1$ and $\mathcal{R}_2$. In Section~\ref{sec:attacks}, we discuss the impact of each attack depending on whether the victim relayer represents a single user or aggregates multiple users. 
We distinguish between \textbf{core privacy properties}, which capture the fundamental privacy and fairness guarantees expected of a dark pool, and \textbf{operational properties}, which capture mechanism-level requirements needed to realize these properties in adversarial environments.

\medskip
\noindent\textbf{Protocol Model.}
We model the Renegade matching protocol $\Pi$ as a two-party protocol between relayers $\mathcal{R}_1$ and $\mathcal{R}_2$. Each relayer $\mathcal{R}_i$ holds a private order
\[
  x_i = (\mathit{pair}_i,\; \mathit{dir}_i,\; \mathit{price}_i,\; \mathit{qty}_i),
\]
where $\mathit{pair}_i$ is the token pair, $\mathit{dir}_i \in \{\mathsf{buy}, \mathsf{sell}\}$ the trade direction, $\mathit{price}_i$ the limit price, and $\mathit{qty}_i$ the quantity. The matching functionality $\mathcal{F}$ outputs a match tuple
\[
  M = (\hat{m}_1, \hat{m}_2, \hat{v}_1, \hat{v}_2, \hat{d}, f_1, f_2)
\]
if the orders are compatible, and $\perp$ otherwise, where $\hat{m}_i$ are the matched asset amounts, $\hat{v}_i$ the corresponding values, $\hat{d}$ the execution direction, and $f_i$ the relayer fees.

The protocol proceeds in two phases. In the \emph{discovery phase}, relayers exchange handshake messages over a peer-to-peer gossip network to identify compatible counterparties. In the \emph{execution phase}, matched relayers run the MPC protocol, generate collaborative zero-knowledge proofs, and reconstruct the output $M$. The underlying MPC provides malicious security with abort: correctness holds conditioned on completion, but output delivery is not guaranteed --- a party may abort at any point without on-chain penalty. The adversary's view covers both phases.

\medskip
\noindent\textbf{Adversary Model.}
We consider an adversary $\mathcal{A}$ that corrupts one relayer (without loss of generality $\mathcal{R}_2$) and may deviate arbitrarily from the protocol. The adversary's capabilities are those of any standard protocol participant: it can establish connections, send and receive gossip and handshake messages, and participate in the MPC. No additional network access or side-channel information is assumed. In particular, the attacks presented in Section~\ref{sec:attacks} are \emph{information-theoretic}: they require no computational assumptions and succeed with advantage~$1$.

\medskip
\noindent\textbf{Core Privacy Properties:}

\begin{itemize}
\item \noindent\textbf{Pre-trade Privacy.}
$\Pi$ satisfies pre-trade privacy if an adversary cannot distinguish between two honest input vectors that produce the same output.

Formally, for any two input vectors $(x_1, x_2)$ and $(x_1', x_2)$ with $\mathcal{F}(x_1, x_2) = \mathcal{F}(x_1', x_2)$,
\[
\mathsf{view}_{\mathcal{A}}^\Pi(x_1, x_2) \;\approx_c\; \mathsf{view}_{\mathcal{A}}^\Pi(x_1', x_2),
\]
where the view includes all messages observed during both the discovery and execution phases, prior to output reconstruction.

\item \noindent\textbf{Output Fairness.}
$\Pi$ satisfies output fairness if whenever $\mathcal{A}$ obtains the match result $M$, the honest party $\mathcal{R}_1$ also receives $M$, except with negligible probability.

\emph{Note:} Output fairness is not achievable by MPC-with-abort alone~\cite{cleve1986limits}; it requires additional structure such as on-chain penalties or a trusted delivery mechanism.

\item \noindent\textbf{Post-trade Privacy.}
$\Pi$ satisfies post-trade privacy if, after execution, the adversary learns nothing beyond its own input $x_2$ and its output $y_2$ (which may be $M$ or $\perp$). Formally, there exists a PPT simulator $\mathcal{S}$ such that for all inputs $(x_1, x_2)$:
\[
\mathsf{view}_{\mathcal{A}}^\Pi(x_1, x_2) \;\approx_c\; \mathcal{S}(x_2,\, y_2).
\]

\end{itemize}

\noindent Together, these three properties cover the full protocol timeline: pre-trade privacy governs the period before the output is revealed, output fairness governs the opening itself, and post-trade privacy governs what is learnable thereafter.

\medskip
\noindent\textbf{Operational Properties.}
The following properties capture system-level requirements for correct and available protocol execution.

\begin{itemize}

\item \textbf{Correctness.}
$\Pi$ satisfies correctness if, whenever both parties complete the protocol, they output $\mathcal{F}(x_1, x_2)$, except with negligible probability.

\item \textbf{Input Binding.}
$\Pi$ satisfies input binding if each party’s input is fixed upon submission, and the protocol execution remains consistent with this input, except with negligible probability.

\item \noindent\textbf{Liveness.}
$\Pi$ satisfies $\Delta$-liveness if, for any adversary $\mathcal{A}$ and any two honest relayers $\mathcal{R}_i$, $\mathcal{R}_j$ holding compatible orders, at least one of $\mathcal{R}_i$ or $\mathcal{R}_j$ completes a match with some honest counterparty within bounded time $\Delta$. In particular, adversarial behavior cannot indefinitely prevent the protocol from making progress: whenever a valid match exists, at least one party to it will settle within bounded time.

\end{itemize}

We emphasize that malicious security of the underlying MPC guarantees correctness throughout: the protocol never produces an incorrect output. However, correctness alone is insufficient for dark pool security. In Section~\ref{sec:attacks}, we show that all remaining properties can be violated by an adversary with standard protocol capabilities.

\section{Attacks on Renegade}
\label{sec:attacks}
We present a series of attacks demonstrating that protocol $\Pi$ fails to satisfy the
security properties defined in Section~\ref{sec:properties}. The adversary model is as
defined there: $\mathcal{A}$ corrupts relayer $\mathcal{R}_2$ and has the capabilities
of any standard protocol participant. For clarity, we model each relayer as representing
a single user; where the multi-user case changes the attack impact we note it explicitly.

\subsection{\AtkZero: \AtkZeroTitle}
\label{Sec:attack_prempc_leak}

Before any MPC execution, $\mathcal{A}$ can determine which token pairs and directions $\mathcal{R}_1$ has open orders for, along with the wallet-state identifier of each responding wallet. This leakage occurs entirely at the handshake layer and violates pre-trade privacy.

\medskip
\noindent\textbf{Attack.}
Trading intent is revealed at the handshake layer: $\mathcal{R}_1$ responds to a handshake query only if it holds an order compatible with the queried $(p, d)$. $\mathcal{A}$ proceeds as follows:

\begin{enumerate}
    \item For each candidate token pair $p$ and direction $d \in \{\mathsf{buy}, \mathsf{sell}\}$, $\mathcal{A}$ sends a handshake query to $\mathcal{R}_1$ claiming to hold a matching-direction order for $(p, d)$.
    \item If $\mathcal{R}_1$ responds, $\mathcal{A}$ receives the handshake payload

   \[
    H_1=\bigl(
    \Pi_{\mathsf{commit}},
   N^{\textsf{wallet-match}}(W'), 
  \mathrm{pk}^{\textsf{settle}}_{\mathcal{R}_1},
  H_o,H_b,H_f,H_r
  \bigr)
   \]
    and learns: (a) that $\mathcal{R}_1$ holds an active order with $\mathit{pair}_1 = p$ and $\mathit{dir}_1 = \bar{d}$, and (b) the nullifier $N^{\textsf{wallet-match}}(W')$, a deterministic function of the wallet state $W'$ that serves as a stable pseudonymous identifier for the responding wallet.
    \item $\mathcal{A}$ repeats across all candidate $(p, d)$ pairs.
\end{enumerate}

The set of candidate pairs to probe is small in practice: on-chain deposit data bounds the universe of tokens active on the network (see Section~\ref{sec:empirical}).

\smallskip
\begin{theorem}
Protocol $\Pi$ does not satisfy pre-trade privacy.
\end{theorem}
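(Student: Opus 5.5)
We show that pre-trade privacy fails by exhibiting two honest inputs $x_1$ and $x_1'$ that give the same functionality output against a fixed adversarial input $x_2$. On one of them the adversary's discovery-phase view contains a handshake payload; on the other it does not. A distinguisher that just checks whether a handshake response arrived then has advantage $1$. This is stronger than violating computational indistinguishability, and it matches the paper's remark that the attacks are information-theoretic.

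\textbf{Choosing the inputs.} Let the corrupted $\mathcal{R}_2$ hold an order $x_2$ on pair $p$ with direction $d$, and choose its price and quantity so that neither candidate honest order below crosses it. For example, take $\mathit{price}_2$ far outside any limit price $\mathcal{R}_1$ might post. Set
\[
x_1 = (p, \bar d, \mathit{price}_1, \mathit{qty}_1), \qquad x_1' = (p', \mathit{dir}', \mathit{price}', \mathit{qty}')
\]
with $p' \neq p$, or alternatively $p'=p$ and $\mathit{dir}'=d$. Then
\[
\mathcal{F}(x_1,x_2)=\perp \quad\text{and}\quad \mathcal{F}(x_1',x_2)=\perp,
\]
the first because the limit prices are incompatible and the second because the pair or direction does not match. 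So the hypothesis $\mathcal{F}(x_1,x_2)=\mathcal{F}(x_1',x_2)$ of the definition holds.

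\textbf{Separating the views.} The adversary runs step 1 of \AtkZero{} with query $(p,d)$. From the protocol description of the counterparty discovery and handshake steps, $\mathcal{R}_1$ replies with $H_1$ exactly when it holds an order with $\mathit{pair}_1=p$ and $\mathit{dir}_1=\bar d$. This response decision depends only on pair and direction; price and quantity are checked only inside the MPC. Hence under $x_1$ the view contains a well-formed payload $H_1$ (a verifying $\Pi_{\mathsf{commit}}$ together with the nullifier $N^{\textsf{wallet-match}}(W')$), while under $x_1'$ the connection is dropped and no such payload appears. Both events lie in the discovery phase, before output reconstruction, so they belong to the view used in the definition.

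\textbf{The distinguisher.} $D$ outputs $1$ iff the view contains a handshake payload whose proof verifies. Then
\[
\Pr[D(\mathsf{view}^\Pi_{\mathcal{A}}(x_1,x_2))=1] = 1 \quad\text{and}\quad \Pr[D(\mathsf{view}^\Pi_{\mathcal{A}}(x_1',x_2))=1] = 0,
\]
so the advantage is $1$, which is non-negligible, and pre-trade privacy fails.

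\textbf{Main obstacle.} The delicate step is justifying that the response decision is a deterministic function of $(\mathit{pair},\mathit{dir})$ alone, independent of price. Without that, one cannot fix a pair $x_1, x_1'$ with equal outputs in which only one triggers a response. I would ground this in the protocol description (the lightweight handshake reveals the token pair before any MPC), and note that equal outputs could alternatively be arranged with matching prices but $\perp$ forced by balance constraints.
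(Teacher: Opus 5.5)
Your proof is correct and follows essentially the same argument as the paper's: you pick two honest inputs whose matching outputs are both $\perp$ but which differ in whether $\mathcal{R}_1$ answers the adversary's $(\mathit{pair},\mathit{dir})$ handshake query, so the presence of $H_1$ in the discovery-phase view distinguishes them with advantage $1$. The only difference is cosmetic. The paper lets $x_1, x_1'$ differ solely in token pair and leaves the choice of $x_2$ implicit, while you make $\mathcal{F}(x_1,x_2)=\perp$ explicit via incompatible prices and also allow a direction flip; both versions rest on the assumption you correctly flag, that the response decision depends only on pair and direction.
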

\begin{proof}[Proof (sketch)] Let $x_1, x_1'$ differ only in token pair ($\mathit{pair} \neq \mathit{pair}'$), with $x_2$ chosen so $\mathcal{F}(x_1, x_2) = \mathcal{F}(x_1', x_2) =  \perp$. $\mathcal{A}$ issues a sell handshake for $\mathit{pair}$ to $\mathcal{R}_1$: it receives a response under $x_1$ but not under $x_1'$, distinguishing the two views with advantage $1$. \qedhere
\end{proof}

\smallskip
\noindent\textbf{Corollary (wallet-state leakage).} Each handshake response also reveals the wallet-match nullifier $N^{\textsf{wallet-match}}(W')$, a deterministic function of the wallet state that changes precisely when the wallet’s orders or balances are updated. Since all orders of a single user belong to the same wallet, all responding probes at a given point in time return the same nullifier. Consequently, the number of distinct nullifiers collected across all $(p,d)$ probes equals the number of distinct users at $\mathcal{R}_1$ with active orders. If $\mathcal{R}_1$ aggregates multiple users, $\mathcal{A}$ learns an exact count of active users and holds a stable pseudonymous identifier for each.

\medskip
\noindent\fcolorbox{black}{white!12}
{\parbox{\dimexpr\linewidth-2\fboxsep-2\fboxrule\relax}{%
\textbf{What \AtkZero{} reveals about $\mathcal{R}_1:$} By probing $\mathcal{R}_1$ across all candidate token pairs, $\mathcal{A}$ learns: (i) which pairs and directions $\mathcal{R}_1$ has active orders for; (ii) the number of users with open orders, given by the count of distinct wallet-match nullifiers returned.
}}

\medskip
\noindent\colorbox{gray!12}
{\parbox{\dimexpr\linewidth-2\fboxsep\relax}{%
\textbf{Network-wide corollary.} Since any peer is addressable, applying this procedure across all known relayers yields a real-time map of trading intent and active wallet states across the entire network, updated continuously without any MPC participation.
}}

\subsection{\AtkOne: \AtkOneTitle}
\label{Sec:attack_opening_fairness}
During the MPC output reconstruction phase, $\mathcal{A}$ withholds its share while receiving $\mathcal{R}_1$'s share of the matched tuple, enabling unilateral reconstruction of the match tuple. Consequently, $\mathcal{A}$ learns the trade price and size in addition to the trade intent, violating output fairness and post-trade privacy.

\medskip
\noindent\textbf{Attack.}
Following MPC execution and collaborative proof generation, $\mathcal{R}_1$ and $\mathcal{A}$ hold additive secret shares $[M]_1, [M]_2$ of the match tuple $[M]$ as:
\[
M = (\hat{m}_1, \hat{m}_2, \hat{v}_1, \hat{v}_2, \hat{d}, f_1, f_2).
\]
In order to reconstruct $M$ during opening phase, $\mathcal{R}_1$ and $\mathcal{A}$ must exchange $[M]_1$ and $[M]_2$ with each other.
$\mathcal{A}$ proceeds as follows:
\begin{enumerate}
    \item $\mathcal{A}$ waits for $\mathcal{R}_1$ to send its share $[M]_1$ as part of the output reconstruction phase.
    \item Upon receiving $[M]_1$, $\mathcal{A}$ reconstructs the full match $M$ using its local share $[M]_2$ and received share $[M]_1$.
    \item $\mathcal{A}$ aborts without sending $[M]_2$ to $\mathcal{R}_1$ if the trade is not favorable, otherwise proceeds to settle the trade.
\end{enumerate}

\medskip
\noindent\colorbox{gray!12}{\parbox{\dimexpr\linewidth-2\fboxsep\relax}{%
\textbf{Remark (free option).} Independent of the privacy violations below, this attack grants $\mathcal{A}$ a zero-premium option on every matched trade. By aborting unfavorable executions and settling favorable ones, $\mathcal{A}$ systematically extracts value from honest counterparties.
}}

\medskip
\begin{theorem}
Protocol $\Pi$ does not satisfy output fairness.
\end{theorem}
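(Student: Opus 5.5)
The plan is to exhibit a concrete adversary $\mathcal{A}$ controlling $\mathcal{R}_2$, namely the strategy of \AtkOne{} above, and show that it obtains $M$ with probability $1$ while $\mathcal{R}_1$ never does. This is exactly the negation of output fairness as defined in Section~\ref{sec:properties}. The argument is direct rather than a reduction, in the same style as the sketch for \AtkZero{}. I will fix inputs $(x_1,x_2)$ with $\mathcal{F}(x_1,x_2)=M\neq\perp$. Such inputs exist, since $\mathcal{A}$ can choose $x_2$ compatible with any order it has located via \AtkZero{}.

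\textbf{Step 1: $\mathcal{A}$ behaves honestly up to the opening.} Up to the opening, $\mathcal{A}$ runs the handshake, the SPDZ online phase and the collaborative proof generation honestly. By correctness of the maliciously secure MPC conditioned on honest behaviour, at the end of this phase the parties hold additive shares $[M]_1,[M]_2$ with $[M]_1+[M]_2=M$, together with valid MAC shares.

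\textbf{Step 2: ordering of the opening.} The opening is a simultaneous-exchange step. In an asynchronous, or rushing, network model the adversary may delay its own message until it has received $\mathcal{R}_1$'s. The protocol provides no commit-then-reveal or gradual release at this step. I would state explicitly that the honest party sends $[M]_1$, with its MAC share, without first waiting for $[M]_2$; this is consistent with the implementation of the opening described in Section~\ref{sec:background}. $\mathcal{A}$ therefore receives $[M]_1$, computes $M=[M]_1+[M]_2$, and can check the MAC locally. Since $\mathcal{A}$ itself followed the protocol honestly, the check passes, so $\mathcal{A}$ holds the correct $M$ with probability $1$.

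\textbf{Step 3: $\mathcal{A}$ aborts.} $\mathcal{A}$ then never sends $[M]_2$. Here I would argue that $\mathcal{R}_1$'s view is independent of $M$. Additive sharing means $[M]_1$ alone is uniformly distributed and independent of $M$. Fixing $\mathcal{R}_1$'s input and randomness, every value of $M$ consistent with $x_1$ is equally compatible with its view, so $\mathcal{R}_1$ cannot output $M$ except by guessing. For non-degenerate matches, whose fields range over large sets in $\mathbb{F}$, a guess succeeds with at most negligible probability. Nothing on-chain releases $M$ either, because no settlement transaction is submitted, and Renegade imposes no penalty or timeout-based delivery mechanism. Hence $\Pr[\mathcal{A}\text{ obtains }M \wedge \mathcal{R}_1\text{ does not}]=1-\mathrm{negl}$, which violates output fairness. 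I would close by noting this is consistent with Cleve's impossibility result~\cite{cleve1986limits}.

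The main obstacle is Step 2: the formal model has to be pinned down so that the honest party cannot simply refuse to send first. I would handle it by appealing to the rushing-adversary convention standard for MPC-with-abort. I would also observe that any fixed order of messages in a two-party exchange leaves one party last to send, and by symmetry $\mathcal{A}$ may play that role.
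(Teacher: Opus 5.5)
Your proposal is correct and follows essentially the same argument as the paper, which likewise fixes inputs with $\mathcal{F}(x_1,x_2)=M\neq\perp$, has $\mathcal{A}$ reconstruct $M$ from $[M]_1$ and its own share while withholding $[M]_2$, and concludes $\Pr[\mathcal{A}\text{ obtains } M]=1$, $\Pr[\mathcal{R}_1\text{ obtains } M]=0$; your rushing-adversary remark usefully makes explicit the message ordering that the paper simply assumes. The one thing to trim is the Step~3 claim that $\mathcal{R}_1$ cannot even \emph{guess} $M$ except with negligible probability. The definition only asks whether $\mathcal{R}_1$ \emph{receives} $M$, which already fails once $[M]_2$ is withheld and the honest party outputs $\perp$. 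The guessing bound is also unjustified for fixed inputs: with $\hat{v}=\min(v_1,v_2)$ executed at the public midpoint $P_{\mathsf{mid}}$, much of $M$ is predictable from $x_1$. Similarly, the local MAC check in Step~2 is unnecessary, and in SPDZ it is interactive since $\alpha$ is shared; correctness of $M$ already follows from both parties having run honestly.
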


\begin{theorem}
\label{thm:post-trade-privacy}
Protocol $\Pi$ does not satisfy post-trade privacy.
\end{theorem}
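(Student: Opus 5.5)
To prove Theorem~\ref{thm:post-trade-privacy}, I will exhibit an adversary whose real view can be told apart from any simulated view, using the selective-abort strategy of \AtkOne{}. The simulator's ideal output $y_2$ has to match what $\mathcal{A}$ actually obtains. So I would fix the honest output to be $y_2 = \perp$ whenever $\mathcal{A}$ aborts before sending $[M]_2$. Informally, from the honest side the session ended in an abort, and the adversary is entitled to learn nothing beyond $x_2$. The aim is then to show that $\mathsf{view}^\Pi_{\mathcal{A}}$ nonetheless determines $M$, and that $M$ depends on $x_1$ in a way that $\mathcal{S}(x_2,\perp)$ cannot reproduce.

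\emph{Construction.} Take two honest inputs $x_1, x_1'$ with the same pair and direction, both compatible with $x_2$. Choose them so that the match tuples $M = \mathcal{F}(x_1,x_2)$ and $M' = \mathcal{F}(x_1',x_2)$ differ. For example, pick different quantities, both smaller than $\mathit{qty}_2$, so that the matched amounts $\hat m_1 \neq \hat m_1'$. $\mathcal{A}$ runs the protocol honestly up to the opening. It receives $[M]_1$, computes $[M]_1 + [M]_2$, then aborts, so $\mathcal{R}_1$ outputs $\perp$. By correctness of the MPC up to the opening (malicious security guarantees that the shares held are consistent shares of $\mathcal{F}(x_1,x_2)$), the reconstructed value equals $M$ in the first world and $M'$ in the second. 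The distinguisher simply reads off $\hat m_1$ from the view, so its advantage is $1$.

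\emph{Ruling out a simulator.} Suppose some $\mathcal{S}$ exists. Its output distribution $\mathcal{S}(x_2,\perp)$ is identical in both worlds. Yet the real views are perfectly distinguishable by the test above. By the triangle inequality, at least one of the two real views must differ from $\mathcal{S}(x_2,\perp)$ with advantage at least $1/2$, which contradicts $\approx_c$.

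\emph{Main obstacle.} The delicate step is justifying that the ideal output for $\mathcal{A}$ is $\perp$ rather than $M$. If one instead sets $y_2 = M$, a simulator given $M$ could trivially produce the view, and the argument collapses. I would resolve this by appealing to the definition. Post-trade privacy must hold for the outcome the honest party observes, and an aborted session yields no match, so the relayers settle nothing on-chain. For good measure I would strengthen the argument with the repeated-probing variant: across many aborted sessions with varying $x_2$, the adversary accumulates $\mathcal{F}(x_1, x_2^{(j)})$ for every $j$, which reveals $x_1$ (price and quantity) far beyond any single output. This shows the leakage is not an artifact of how $y_2$ is chosen.
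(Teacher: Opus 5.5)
Your argument is essentially the paper's: the selective abort at the opening, the convention $y_2=\perp$ for the aborting adversary, and the observation that a view determining $M$ cannot be produced by $\mathcal{S}(x_2,\perp)$; your two-input triangle-inequality step makes rigorous the paper's one-line claim that the simulator ``cannot reconstruct $M$''. One caveat on your closing remark: repeated probing does not remove the dependence on the choice of $y_2$, because if each aborted session's output were $y_2^{(j)}=M_j=\mathcal{F}(x_1,x_2^{(j)})$, a simulator handed every $M_j$ could reproduce the accumulated view, so that variant illustrates the practical harm of free aborts but rests on the same $y_2=\perp$ convention as your main argument.
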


\begin{proof}[Proof (sketch)]
Let $\mathcal{R}_1$ and $\mathcal{A}$ hold inputs $x_1, x_2$ with $\mathcal{F}(x_1, x_2)=M \neq \perp$. After execution both parties hold additive shares $([M]_1, [M]_2)$. During reconstruction, $\mathcal{R}_1$ sends $[M]_1$ to $\mathcal{A}$, which reconstructs $M$ from $([M]_1,[M]_2)$ and aborts without returning $[M]_2$, setting $y_2 = \perp$.

\smallskip\noindent\emph{Output fairness.} $\mathcal{A}$ holds $M$ while $\mathcal{R}_1$ cannot reconstruct it:
\[
\Pr[\mathcal{A} \text{ obtains } M] = 1; \quad \Pr[\mathcal{R}_1 \text{ obtains } M] = 0.
\]

\noindent\emph{Post-trade privacy.} $\mathcal{A}$’s view determines $M$ while its output is $\perp$. Any simulator $\mathcal{S}(x_2, \perp)$ cannot reconstruct $M$, so
\[
\mathsf{view}_{\mathcal{A}}^\Pi(x_1, x_2) \not\approx_c \mathcal{S}(x_2, \perp). \qedhere
\]
\end{proof}

\medskip
\noindent\textbf{Corollary (orderbook extraction).}
For an honest relayer $\mathcal{R}_1$ holding a private orderbook
\[
x_i = (\mathit{pair}_i,\; \mathit{dir}_i,\; \mathit{price}_i,\; \mathit{qty}_i),
\]
the adversary $\mathcal{A}$ can reconstruct the entire orderbook without committing to any trade.

\begin{proof}[Proof (sketch)]
From the handshake phase in \AtkZero, $\mathcal{A}$ obtains all $(\mathit{pair}_i,\mathit{dir}_i)$ for which $\mathcal{R}_1$ responds.
From \AtkOne, each matched MPC execution reveals $(\mathit{price}_i,\mathit{qty}_i)$ upon opening.
For each identified $(\mathit{pair}_i,\mathit{dir}_i)$, $\mathcal{A}$ fixes a matching direction and varies $(\mathit{price}_2,$ $\mathit{qty}_2)$ across repeated MPC executions. Each run reveals whether the hidden $(\mathit{price}_i,\mathit{qty}_i)$ is above or below the chosen values, yielding a comparison oracle.
Thus, binary search recovers $(\mathit{price}_i,\mathit{qty}_i)$ using logarithmic number of queries. Repeating over all $(\mathit{pair}_i,\mathit{dir}_i)$ yields the full orderbook. 
\end{proof}

\medskip
\noindent\fcolorbox{black}{white!12}
{\parbox{\dimexpr\linewidth-2\fboxsep-2\fboxrule\relax}{%
\textbf{What \AtkOne{} reveals about $\mathcal{R}_1:$}
$\mathcal{A}$ learns: (i) the exact price and quantity of the matched order; (ii) via repeated binary search, the complete orderbook of $\mathcal{R}_1$. 
}}

\subsection{\AtkTwo: \AtkTwoTitle}
\label{sec:on_offchain_correlationattack}
By combining active wallet probing with passive on-chain monitoring, $\mathcal{A}$ links anonymous wallet states to external funding addresses and reconstructs per-wallet balance evolution. This breaks post-trade privacy by enabling full wallet deanonymization.

\medskip
\noindent\textbf{Attack.}
The attack exploits the fact that wallet nullifiers change deterministically with deposits and withdrawals, while on-chain transfers reveal timing and counterparty addresses. $\mathcal{A}$ proceeds as follows:

\begin{enumerate}
  \item From \AtkZero{}, $\mathcal{A}$ obtains the set of active wallet nullifiers $N^{\textsf{wallet-match}}(W')$ and their grouping over relayers.
    
   \item $\mathcal{A}$ repeatedly probes all active wallets and records their current nullifiers.
    
    \item $\mathcal{A}$ passively observes ERC-20 transfer events involving the protocol contract, including $(\text{token}, \text{amount}, \text{direction}, \text{addr})$.
    
    \item After each observed transfer, $\mathcal{A}$ re-probes wallets and identifies the unique wallet whose nullifier changes.
    
    \item This establishes a linkage between the wallet nullifier and the on-chain address responsible for the transfer.
\end{enumerate}

\medskip
\noindent\textbf{Corollary (on-chain/off-chain linkage).}
The post-trade privacy violation of Theorem~\ref{thm:post-trade-privacy} enables a stronger consequence under passive on-chain monitoring. Since $\mathcal{A}$ holds the wallet nullifiers in plaintext, and each deposit or withdrawal produces a unique nullifier update alongside a visible ERC-20 transfer $(\mathsf{token}, \mathsf{amount},$ $\mathsf{direction}, \mathsf{addr})$, correlating the two yields a direct linkage
\[
N^{\textsf{wallet-match}}(W') \;\longleftrightarrow\; \mathsf{addr}.
\]
Repeated correlation across events resolves ambiguity and yields a complete mapping from wallet nullifiers to funding addresses and balance evolution.\medskip

\noindent\fcolorbox{black}{white!12}
{\parbox{\dimexpr\linewidth-2\fboxsep-2\fboxrule\relax}{%
\textbf{What \AtkTwo{} reveals about $\mathcal{R}_1$:} For each active wallet, $\mathcal{A}$ learns: (i) its funding address; (ii) its token balances over time; and (iii) its associated orders (from \AtkOne{}), yielding a fully deanonymized trading profile.
}}

\medskip
\noindent\colorbox{gray!12}
{\parbox{\dimexpr\linewidth-2\fboxsep\relax}{%
\textbf{Network-wide corollary.} Applying this attack across relayers yields a global mapping from anonymous wallets to funding addresses, along with their balances and trading activity, enabling network-wide deanonymization.
}}

\subsection{\AtkThree: \AtkThreeTitle}
$\mathcal{A}$ initiates MPC executions using inputs inconsistent with those committed during the handshake, forcing $\mathcal{R}_1$ to execute costly MPC computations that deterministically abort. Each such execution locks $\mathcal{R}_1$'s wallet and wastes compute on a match that can never succeed. Extending this to all relayers simultaneously breaks liveness: no compatible pair can make progress while all wallets remain locked.

\medskip
\noindent\textbf{Attack.}
The adversary $\mathcal{A}$ exploits the absence of pre-MPC consistency checks between the on-chain wallet commitment and the inputs supplied to the MPC. Specifically:
\begin{enumerate}
   
    \item $\mathcal{A}$ performs a valid handshake with $\mathcal{R}_1$ using a committed wallet state $W = (o, b, f, r)$ with a correct commitment, unspent state, and sufficient balance to cover fees.
    
    \item $\mathcal{A}$ submits MPC input $x_2' = (o^\ast, b, f, r)$ such that $o^\ast \neq o$ and $o^\ast \notin W$.
    
    \item $\mathcal{F}$ executes the MPC protocol on $(x_1, x_2')$; the inconsistency is detected only within the circuit, causing $\mathcal{F}$ to output $\perp$.
    
     \item Since each session locks $\mathcal{R}_1$'s wallet until abort and $\mathcal{A}$ incurs no penalty for repeating, $\mathcal{A}$ maintains a continuous stream of invalid sessions, keeping $\mathcal{R}_1$ locked indefinitely.
    
\end{enumerate}

\begin{theorem}
Protocol $\Pi$ does not satisfy input binding.
\end{theorem}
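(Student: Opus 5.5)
The proof is a direct counterexample built from the griefing attack just described, in the same style as the sketches for the earlier theorems. Input binding requires that each party's input be fixed upon submission and that execution stay consistent with it, except with negligible probability. I take the handshake payload $H_2$ as the point of submission, since it carries the hiding commitments $(H_o,H_b,H_f,H_r)$ and the proof $\Pi_{\mathsf{commit}}$ that these correspond to a valid, unspent wallet state $W=(o,b,f,r)$. The goal is an adversary $\mathcal{A}$ that, with probability $1$, submits a handshake committing to $W$ and then runs the MPC on an input $x_2'=(o^\ast,b,f,r)$ with $o^\ast\neq o$.

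The steps, in order, are as follows.

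First, fix a legitimately committed wallet $W$ for $\mathcal{A}$ with unspent state and enough balance for fees, so that $\Pi_{\mathsf{commit}}$ is an honestly generated proof. The honest $\mathcal{R}_1$ therefore accepts the handshake: it verifies the proof and nullifier freshness, and both checks pass.

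Second, observe that in $\Pi$ the MPC inputs are secret-shared directly by each party. Nothing before MPC execution checks that the shared values open to the committed $H_o$. So $\mathcal{A}$ can share $o^\ast$ instead of $o$, and the session enters the MPC execution phase, with $\mathcal{R}_1$'s wallet locked and preprocessing consumed.

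Third, note that the only consistency check is inside the circuit that produces $\Pi_{\mathsf{match}}$. The mismatch between $o^\ast$ and $H_o$ is caught there and the output is $\perp$. This detection happens after the execution has already proceeded on an input other than the one fixed at submission.

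Hence $\Pr[\text{execution uses } x_2'\neq \text{committed input}]=1$, which is non-negligible, and input binding fails. Correctness is not contradicted, since no wrong $M$ is ever output.

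The main obstacle is definitional. A defender could argue that because the circuit eventually rejects, the execution is "consistent" in the sense that no output bound to $o^\ast$ is produced. I would address this by reading input binding operationally: the inputs on which the expensive MPC is evaluated must equal the committed ones, and checking consistency after evaluation does not meet that requirement. I would support this reading by pointing to step 4 of the attack. There, $\mathcal{A}$ can repeat the deviation indefinitely at no penalty, and each repetition forces real cost and wallet locking on $\mathcal{R}_1$. That shows the deviation has concrete effect and is not blocked upon submission, which is exactly what input binding is meant to prevent.
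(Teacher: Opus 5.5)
Your proposal is correct and matches the paper's own argument: the adversary completes a valid handshake on a committed $W$, then supplies MPC inputs $x_2'$ inconsistent with it, and because no consistency check runs before execution, the mismatch is caught only inside the circuit, which outputs $\perp$. Your explicit treatment of the definitional point --- that a consistency check applied only after the MPC has already been evaluated on $x_2'$ does not meet input binding --- states outright what the paper's sketch leaves implicit.
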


\begin{proof}[Proof (sketch)]
Let $\mathcal{A}$ complete a valid handshake with $\mathcal{R}_1$ on committed state $W$, but supply inconsistent MPC inputs $x_2' \neq x_2$. The handshake authenticates $W$, but the MPC proceeds on $x_2' \notin W$. Since no check enforces consistency between $W$ and MPC inputs prior to execution, the inconsistency is detected only inside the circuit, yielding $\mathcal{F}(x_1, x_2') = \perp$, violating input binding.
\end{proof}

\noindent\textbf{Corollary (liveness violation).}
Protocol $\Pi$ does not satisfy $\Delta$-liveness for any $\Delta$.
\begin{proof}[Proof (sketch)]
From the theorem above, each invalid session locks a relayer’s wallet until abort, and $\mathcal{A}$ incurs no penalty for repeating. Take any compatible pair $(\mathcal{R}_i, \mathcal{R}_j)$. $\mathcal{A}$ simultaneously maintains a continuous stream of invalid sessions against both $\mathcal{R}_i$ and $\mathcal{R}_j$, keeping both wallets locked indefinitely. Neither party can enter the matching phase, so neither completes a match within any $\Delta$, violating $\Delta$-liveness.
\end{proof}

\medskip
\noindent\fcolorbox{black}{white!12}
{\parbox{\dimexpr\linewidth-2\fboxsep-2\fboxrule\relax}{%
\textbf{Griefing impact of \AtkThree{} on $\mathcal{R}_1$:} $\mathcal{R}_1$'s wallet is kept locked and its compute wasted on executions that can never succeed. The attack is cheap to sustain: while $\mathcal{R}_1$ must hold and lock real token balances, $\mathcal{A}$ requires only minimal capital.\footnotemark{}
}}
\footnotetext{A valid handshake requires a committed wallet state with an unspent nullifier --- no real trading capital need be held. The honest party bears the full cost of capital lock-up.}

\medskip
\noindent\colorbox{gray!12}
{\parbox{\dimexpr\linewidth-2\fboxsep\relax}{%
\textbf{Network-wide corollary (DoS amplification).} By initiating concurrent invalid executions across relayers, $\mathcal{A}$ can occupy MPC execution capacity network-wide, preventing timely completion of valid matches and degrading overall system throughput.
}}

\medskip
The above attacks expose fundamental protocol-level vulnerabilities. We now examine their implications in the deployed Renegade network through empirical measurement.

\section{Empirical Analysis of the Renegade Network}
\label{sec:empirical}

A natural question is how the attacks presented in Section~\ref{sec:attacks} operate against the real Renegade
network. We begin by characterizing the P2P layer and find that the network is far from the decentralized architecture the
protocol assumes. With only a handful of active relayers, the system operates in practice
closer to a centralized orderbook than a distributed dark pool, concentrating the privacy risk
described in our attacks on very few nodes.
We turn to on-chain data to characterize the economic scale of the current deployment, establishing a baseline for the threat model as the protocol grows toward a genuinely decentralized setting.

\subsection{P2P Network Structure}

To characterize the network the attacks operate against, we deployed a relayer node. Since no public bootstrap nodes exist, we could not join the P2P network directly to enumerate peers. Instead, we used \textbf{ffuf}~\cite{ffuf} to fuzz the Renegade relayer REST API; most endpoints returned errors or required specialized headers, but we identified the public \texttt{/v0/network} endpoint\footnote{Base: \url{https://base-mainnet.relayer.renegade.fi/v0/network}; Arbitrum: \url{https://arbitrum-one.relayer.renegade.fi/v0/network}. Both endpoints return the same set of nodes, indicating a single shared P2P network across Renegade's chain deployments.}, which returns the complete global state of the gossip network --- all active peer IDs, cluster assignments, and addresses. We supplemented this with a crawl using Nebula~\cite{nebula}, a network crawler for libp2p-based peer-to-peer networks. We verified accuracy via a loopback test: our node's PeerID appeared in both the API response and the crawler output immediately on connection, and was removed from both on disconnect. We repeated these measurements at two points in time (December 2025 and March 2026) and observed no changes in network composition or topology.

\medskip
\noindent\colorbox{gray!12}{\parbox{\dimexpr\linewidth-2\fboxsep\relax}{%
  \textbf{Finding.} The entire Renegade mainnet gossip network consists of \textbf{two clusters}
  and \textbf{four nodes}, all hosted on AS16509 (Amazon.com) in the \texttt{us-east-2}
  availability zone.
}}
\medskip

The barrier to joining is high, with no public bootstrap nodes and sparse documentation, which likely explains why the network has not grown beyond the core team's own infrastructure. The network is therefore not operating as a decentralized dark pool: since relayers hold wallet state in plaintext, this concentration means a single entity effectively observes the trading activity of all of its users.

\subsection{On-Chain Data Collection}
Renegade has active deployments on both Arbitrum One and Base, with DEX volume roughly
comparable across the two chains and alternating in dominance month to month~\cite{defillama_renegade}.
We focus on the \textbf{Base network}, as Arbitrum's faster block times produced over 562 million
blocks over our nine-month analysis window (July 2025--March 2026) compared to $\sim$53 million on
Base, making full execution-trace retrieval substantially more tractable.

Data collection used \textbf{Alchemy}~\cite{alchemy}, a blockchain node and data API provider, in a two-stage pipeline. First, we queried all events
associated with the Renegade Darkpool contract address\footnote{\texttt{0xb4a96068577141749cc8859f586fe29016c935db}} to obtain every
unique transaction hash interacting with the protocol. Second, we performed deep log retrieval for
each hash to capture full transaction receipts and internal logs, including emitted events such as
\textit{Merkle Insertion} and \textit{Nullifier Spent}, and granular ERC-20 token transfers.

\subsection{On-Chain Activity}
\label{sec:concentration}
\begin{figure}[t]
    \centering
   \includegraphics[width=.9\linewidth]{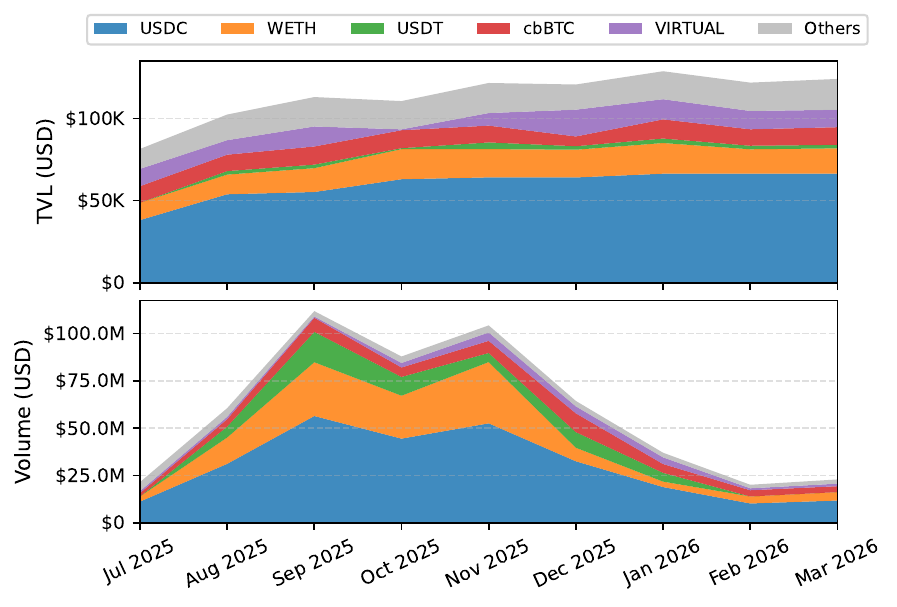}
    \caption{Total Value Locked as all tokens held by the Renegade smart contract (top) and monthly deposit and withdrawal volume (bottom) into the contract, broken down by token and USD-converted daily prices.}
    \label{fig:tvl_volume}
\end{figure}

Figure~\ref{fig:tvl_volume} shows TVL and monthly capital flow over the analysis window, where we use total monthly deposits and withdrawals as a proxy for volume, converted to USD using CoinGecko\footnote{\url{https://www.coingecko.com/}}  daily prices.

TVL in the Renegade contract remains modest, hovering around \$100K USD at any given point. However, the volume of capital moving through the protocol tells a different story: monthly flows peaked at around \$100 million in September and November 2025, and while activity has since declined, flows remain on the order of \$20 million per month. Across the 21 unique tokens observed over the analysis window, volume is highly concentrated: the top 5 tokens (USDC, WETH, USDT, cbBTC, and VIRTUAL) account for 94.6\% of total volume, with USDC alone consistently representing around 50\% in every month.

\begin{figure}[!t]
    \centering
    \includegraphics[width=\linewidth]{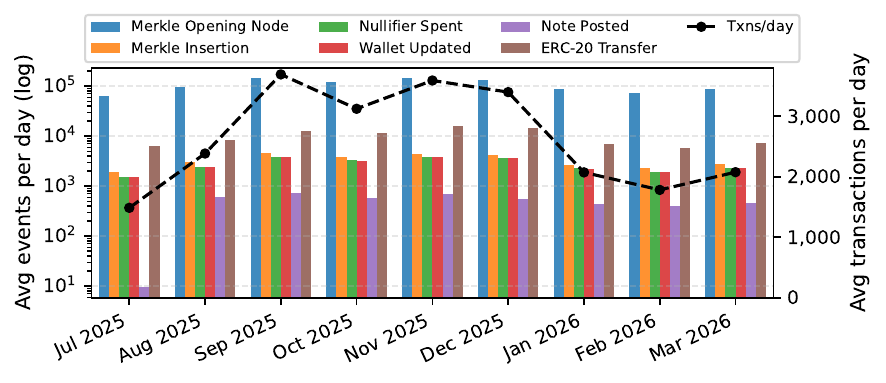}
    \caption{Average daily counts of each event emitted by the Renegade contract on the Base network, and average daily transactions.}
    \label{fig:events_monthly}
\end{figure}

Figure~\ref{fig:events_monthly} shows monthly average daily event and transaction counts, while Table~\ref{tab:events} aggregates these over the full measurement window. The first row counts all transactions interacting with the Renegade contract, and the second is the subset involving direct ERC-20 transfers to or from the contract.\footnote{Transactions may contain logic outside of Renegade (e.g., a router contract that calls Renegade internally). We count only events emitted by the Renegade contract itself, and for token transfers include only amounts moving directly into or out of the Renegade contract address, excluding any further routing.} The remaining rows report protocol events emitted within those transactions.

\begin{table}[!t]
\centering
\footnotesize
\begin{tabular}{lrr}
\hline
Metric & Total & Avg/day \\
\hline
All transactions        & 718{,}627      & $2{,}621 \pm 835$        \\
ERC-20 transfers        & 498{,}040      & $1{,}816 \pm 698$        \\
\hline
Merkle Opening Node     & 28{,}904{,}864 & $105{,}430 \pm 30{,}526$ \\
Merkle Insertion        &    903{,}277   & $3{,}295 \pm 954$        \\
Nullifier Spent         &    766{,}621   & $2{,}796 \pm 879$        \\
Wallet Updated          &    754{,}403   & $2{,}752 \pm 851$        \\
Note Posted             &    135{,}716   & $496 \pm 214$            \\
ERC-20 transfers (events)    &      1{,}013{,}817 & $3{,}696 \pm 1{,}221$ \\
\hline
\end{tabular}
\caption{Totals and average daily rates (mean $\pm$ std across months) for transactions interacting with the Renegade contract and events emitted within them on Base (July 2025--March 2026).
The first two rows count transactions, while remaining rows report protocol events.
}
\label{tab:events}
\end{table}

Merkle Opening Node events dominate at 91.9\% of all events, reflecting wallet membership proofs generated during settlement. Nullifier Spent and Wallet Updated capture wallet state transitions (deposits, withdrawals, order updates, and settlements). Note Posted events, comprising 0.4\% of total activity, provide the closest proxy for completed matches: they are emitted only after successful MPC execution and proof verification, corresponding to approximately $496 \pm 214$ matches per day (roughly one match every 60 blocks at peak, given a 2-second block time).

\subsection{Relay Behavior and Network Concentration}
\label{sec:relay_behavior}

Figure~\ref{fig:senders} shows monthly transaction volume broken down by sender address. Despite 14{,}419 unique senders overall, activity is dominated by three identifiable clusters: Renegade's own relay infrastructure, CoW Protocol solvers~\cite{cow_protocol}, and SwapNet routers~\cite{swapnet}. To link these relay addresses to the user wallets behind them, we trace ERC-20 token transfers within each deposit and withdrawal transaction in December 2025, identifying the on-chain caller and the wallet sending or receiving funds (Figure~\ref{fig:sankey}). In no case was the wallet transferring tokens also the address submitting the transaction, confirming that users never interact with the contract directly. Both the CoW Protocol and SwapNet integrations add an additional observation layer: tokens routed through either aggregator pass through their managed addresses before reaching the user, making the settlement path visible to those operators in addition to the Renegade relay.

\medskip
\noindent\colorbox{gray!12}{\parbox{\dimexpr\linewidth-2\fboxsep\relax}{%
  \textbf{Finding.} Despite 14{,}419 unique senders, the top 15 account for 70.7\% of all transactions, with the single most active address responsible for 12.1\% (60{,}428 transactions). In December 2025, the Dark Pool received $63{,}535$ deposit transactions worth approximately \$31.7M; every deposit was routed through an intermediary. Three clusters dominate: \emph{Renegade Relayers} (five addresses) account for 22\% of deposit transactions ($14{,}045$) but 47\% of deposited volume (\$14.9M), serving $26$ distinct user wallets at an average of $540$ transactions per wallet. \emph{SwapNet} routers (two addresses) contribute 9\% of transactions and 18\% of volume (\$5.8M). \emph{CoW Protocol} solvers (three addresses) are the most active by count---39\% of transactions ($24{,}547$)---but only 15\% of volume (\$4.8M). Together these ten callers cover 69\% of deposit transactions and 80\% of volume.
}}

\medskip
Taken together, the on-chain measurements confirm the centralization observed at the P2P layer: the small set of relay addresses dominating sender activity maps directly to the handful of nodes sustaining the gossip network. The capital flow data makes the relay-to-user mapping visible on-chain, linking wallet addresses to the relay that submits transactions on their behalf. An adversarial counterparty relay can combine this public record with the attacks in Section~\ref{sec:attacks} to build a more complete profile of individual users: on-chain linkage identifies which relay a target uses, while the MPC-layer attacks expose details of their orders and trading history. The volume and token-composition data bound the economic scope of this exposure: roughly 500 matches per day concentrated in five tokens characterizes the order flow at risk in a deployment where independent operators and cross-cluster MPC are the dominant matching mode.

\begin{figure}[t]
    \centering
    \includegraphics[width=\linewidth]{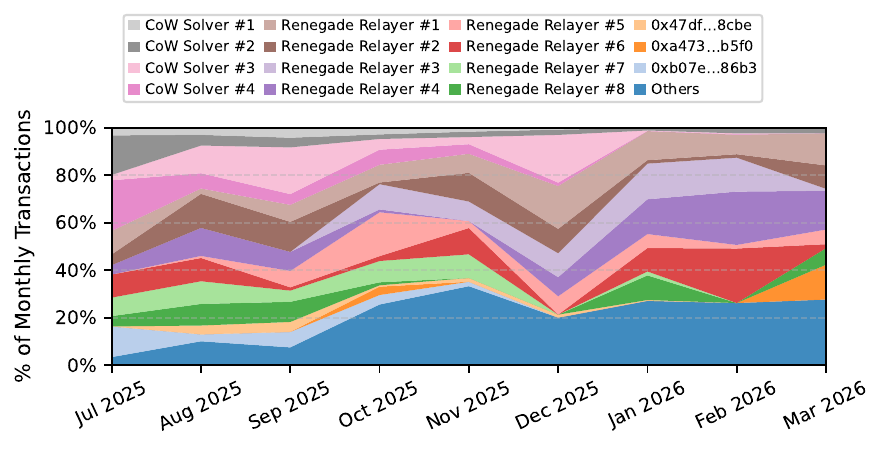}
    \caption{Monthly transaction volume by sender address, July 2025--March 2026, with dominant senders labeled by cluster. }
    \label{fig:senders}
\end{figure}

\begin{figure*}[!t]
    \centering
    \includegraphics[width=0.9\linewidth]{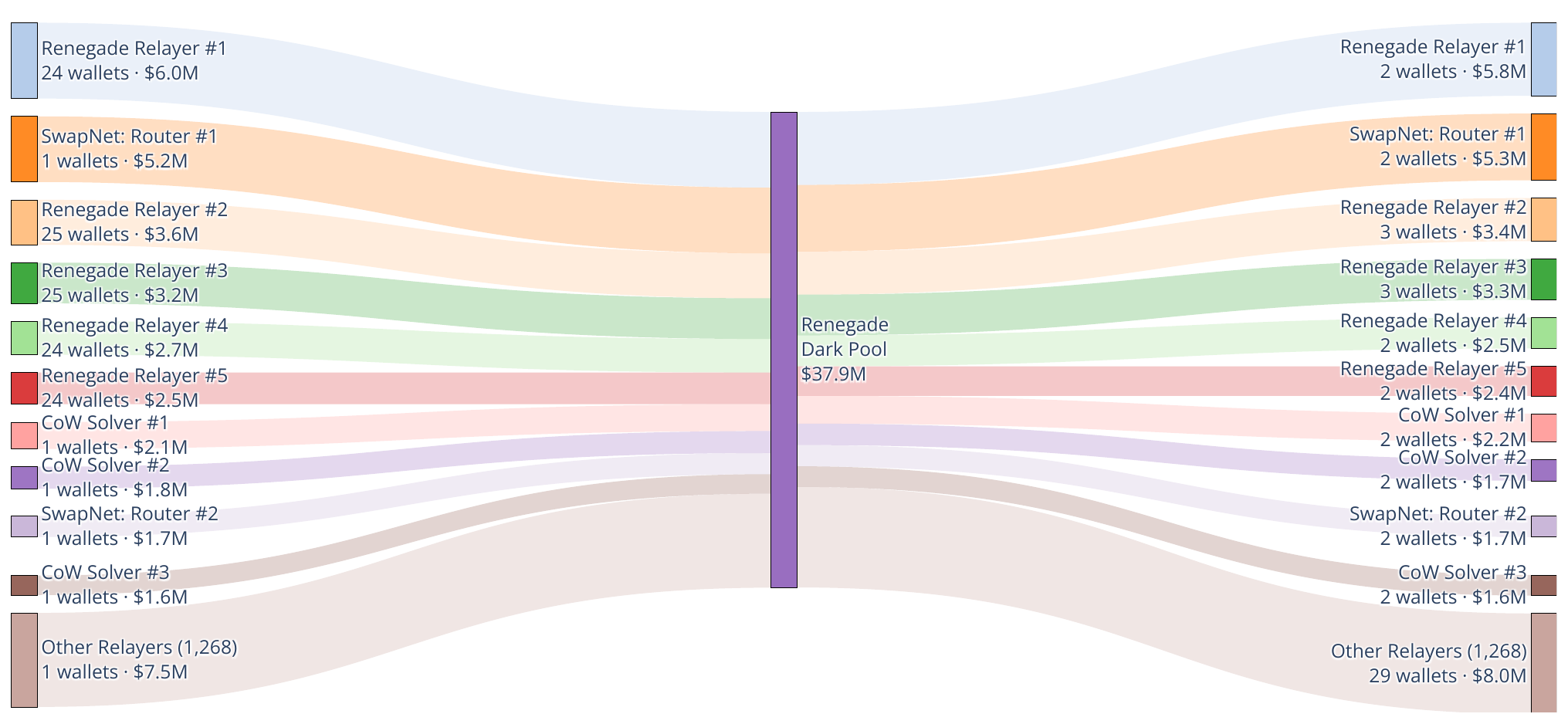}
        \caption{Capital flow through the Renegade Dark Pool contract in December 2025,
    weighted by USD volume of ERC-20 transfers.
    Left nodes show the transaction callers that submitted deposit transactions;
    right nodes show the callers that submitted withdrawal transactions;
    the centre node is the Renegade Dark Pool contract.
    Callers are grouped into three identified clusters---Renegade's own relay
    infrastructure (\emph{Renegade Relayer \#1--5}), CoW Protocol solvers
    (\emph{CoW Solver \#1--3}), and SwapNet routers
    (\emph{SwapNet: Router \#1--2})---plus a residual \emph{Other Relayers} bucket.}
    \label{fig:sankey}
\end{figure*}

\section{Benchmarks}
\label{sec:benchmarks}
The attacks in Section~\ref{sec:attacks} follow from protocol structure alone; whether they are practically efficient is an empirical question. We benchmark the three protocol components governing the cost of Attacks~0, 1, and 3: (i) handshake proof generation, (ii) MPC preprocessing, and (iii) ZKP circuit evaluation; for each, we report honest-execution cost and the asymmetry imposed on the honest party under adversarial behavior.\footnote{Attack~2 requires only passive on-chain observation; its feasibility is established by the empirical analysis in Section~\ref{sec:empirical}.} We find that all four attacks are cheap to execute. The measurements reveal a structural cost asymmetry running through Renegade's design: in each case, the adversary's cost is far lower those cost imposed on the honest party.

\noindent\textbf{Benchmark environment.}
We implement and benchmark Renegade on an Amazon EC2 \texttt{t2.xlarge} instance equipped with 4 vCPUs (Intel Xeon E5-2686 v4 at 2.3\,GHz), 16\,GB RAM, and up to 4\,Gbps network bandwidth, running Ubuntu 24.04. All experiments are implemented in Rust using the \texttt{ark-mpc} framework for SPDZ-style MPC and PLONK-based zkSNARK circuits. MPC experiments use an in-process mock network; reported times reflect computation and serialization overhead only, excluding network latency \footnote{This gives a conservative lower bound on absolute attack cost: latency is an additive term affecting attacker and victim alike, so it does not close the measured cost asymmetries. If anything, latency may favor the adversary in \AtkThree{}, who can hold a session open and delay the invalid input up to the counterparty's response timeout before triggering the abort.}. Wall-clock times are averaged over 10 runs.

\noindent\textbf{Handshake: \emph{how cheap is probing.}}
Both \AtkZero{} and \AtkOne{} originate at the handshake layer, making the handshake RTT the attacker's cost floor. Table~\ref{tab:handshake} shows that proof generation dominates: the \emph{Intent \& Balance} variant gives a computational RTT of $1{,}601 \pm 32$\,ms; \emph{Intent Only} costs $888 \pm 47$\,ms, with verification negligible at $2.1 \pm 0.5$\,ms.\footnote{Each payload is a fixed 966\,B: a 769\,B PLONK proof, a 32\,B wallet-match nullifier $N^{\mathsf{wallet\text{-}match}}$, a 33\,B compressed settlement public key $\mathsf{pk}^{\mathsf{settle}}$, and 132\,B of hiding commitments.}

For \AtkZero{}, this means an adversary can issue $\sim$2 queries per second per relayer with no capital requirement. Given the small active token-pair universe (Section~\ref{sec:empirical}), a full market scan per relayer completes in minutes; applied to all four mainnet nodes simultaneously, it produces a real-time network orderbook map. At larger network scale, the per-probe cost is unchanged and a full scan grows only linearly with the number of active nodes. For \AtkOne{}, the adversary needs only to complete the handshake and the $6.6$\,ms online MPC phase (a total of ${\approx}1.6$\,s) to learn the full match tuple $M$ before deciding whether to abort.

\begin{table*}[t]
\small
\centering
\begin{tabular}{llrr}
    \hline
    Variant & Computation Phase & Wall time (ms) & Payload (B) \\
    \hline
    Intent \& Balance & ProofGen: $\Pi_{\mathrm{commit}}^{R_1}$ & $790 \pm 10$ & 966 \\
    & ProofVrfy: $\Pi_{\mathrm{commit}}^{R_1}$ + ProofGen: $\Pi_{\mathrm{commit}}^{R_2}$ & $809 \pm 20$ & 966 \\
    & ProofVrfy: $\Pi_{\mathrm{commit}}^{R_2}$ & $2.1 \pm 0.5$ & --- \\
    & \textbf{RTT (computational)} & $\mathbf{1{,}601 \pm 32}$ & --- \\
    \hline
    Intent Only & ProofGen: $\Pi_{\mathrm{commit}}^{R_1}$ & $443 \pm 45$ & 966 \\
    & ProofVrfy: $\Pi_{\mathrm{commit}}^{R_1}$ + ProofGen: $\Pi_{\mathrm{commit}}^{R_2}$  & $443 \pm 12$ & 966 \\
    & ProofVrfy: $\Pi_{\mathrm{commit}}^{R_2}$ & $2.1 \pm 0.5$ & --- \\
    & \textbf{RTT (computational)} & $\mathbf{888 \pm 47}$ & --- \\
    \hline 
\end{tabular}
\caption{Renegade handshake performance for a two-round protocol.
Reported times are mean $\pm$ standard deviation (wall-clock, 10 runs).
\textsf{ProofGen} ($\Pi_{\mathrm{commit}}^{R_i}$) denotes the time for relayer $R_i$ to generate and transmit its proof;
\textsf{ProofVerify} ($\Pi_{\mathrm{commit}}^{R_i}$) denotes the time to verify the proof generated by $R_i$ (performed by the counterparty);
RTT denotes the total computational latency across all phases, excluding network delay.}
\label{tab:handshake}
\end{table*}

\begin{table*}[!t]
\small
\centering
\begin{tabular}{llrrrr}
    \hline
    Phase & Sub-phase & Computation & Communication & Rounds & Mult.\ gates \\
    \hline
    Pre-processing (Offline) & Key setup                         & 244\,s  & 5.04\,GB  &  2 & n/a \\
     & Triple gen.\ ($2^{21}$ triples)   & 535\,s  & 15.0\,GB  & 20 & n/a \\
    \hline
    Computation (Online)     & Match circuit$^\dagger$           & $6.6 \pm 0.7$\,ms & 159.1\,KB & 323 & 323 \\
     & Authenticated opening             & ---               &   1.4\,KB &   5 & n/a \\
    \hline
\end{tabular}
\caption{Performance of a single Renegade MPC session (two-party, SPDZ). Preprocessing is non-reusable: an abort due to griefing forces the honest relayer to regenerate a full batch of Beaver triples before the next session. Bandwidth and rounds are reported per party. $^\dagger$ Core match arithmetic only (minimum, fixed-point multiplication, and balance settlement); excludes commitment-consistency checks ($Z_2$) and order validation from the full MPC (Appendix~\ref{app:mpc}). Multiplication gates correspond to Beaver triple consumption per session.}
\label{tab:mpc-session-cost}
\end{table*}

\noindent\textbf{MPC preprocessing: \emph{where griefing leverage arises.}}
\AtkThree{} relies on a structural property of SPDZ: preprocessing is expensive to generate and non-reusable. Table~\ref{tab:mpc-session-cost} shows that the offline phase dominates entirely: key setup ($244 \pm 44$\,s, 5.04\,GB) plus triple generation ($535 \pm 18$\,s, 15.0\,GB) totals 779\,s and 20\,GB per party for a batch of $2^{21}$ Beaver triples. The online match circuit costs only $6.6 \pm 0.7$\,ms and consumes 323 triples per session, so one batch covers ${\approx}6{,}500$ sessions.

This gap is exactly what the griefing attack exploits. An adversary completing a valid handshake then supplying inconsistent MPC inputs forces the target wallet into a locked state and invalidates its entire current triple batch. The attacker spends ${\approx}1.6$\,s; the victim must regenerate 779\,s of preprocessing and loses capacity for ${\approx}6{,}500$ future sessions per wallet. The cost asymmetry is ${\approx}487{\times}$. With only four active mainnet nodes and a small number of active wallets per relay (Section~\ref{sec:empirical}), an adversary can exhaust matching capacity across the entire network with a modest number of concurrent sessions. The $487{\times}$ asymmetry holds regardless of network scale: the attacker can force any wallet into a regeneration cycle at a cost of ${\approx}1.6$\,s, while the victim requires 779\,s to recover before it can match again.

\noindent\textbf{ZKP circuits: the prover/verifier gap.}
Both parties generate a handshake proof, but the adversary controls which variant it submits: an adversary probing with \emph{Intent Only} spends ${\approx}443$\,ms per handshake, while the honest relayer responding with \emph{Intent \& Balance} spends substantially more. Table~\ref{tab:zkp-circuits} reports PLONK proving and verification times across all protocol phases. Handshake circuits dominate: \emph{Intent \& Balance} takes $2{,}888 \pm 660$\,ms ($12{,}345$ constraints); \emph{Intent Only} takes $2{,}226 \pm 830$\,ms. Settlement circuits are lighter ($155$--$763$\,ms). Verification is uniformly fast ($2$--$12$\,ms) across all circuits. Each probe therefore forces the honest relayer to spend $2$--$3$\,s generating $\Pi_{\mathrm{commit}}$ while the adversary pays only ${\approx}443$\,ms, meaning sustained probing can saturate the honest relayer's CPU well below the adversary's query rate.

\medskip
\noindent\colorbox{gray!12}{\parbox{\dimexpr\linewidth-2\fboxsep\relax}{%
\textbf{Takeaway.} The attacks exploit strong cost asymmetries in Renegade's design.
\emph{Probing} saturates honest CPU due to expensive proof generation (2--3\,s) versus negligible verification (2--12\,ms).
\emph{Free-option attacks} reveal full match economics at a cost of only $\approx$1.6\,s per interaction.
\emph{Griefing} forces an honest relayer to discard an entire MPC preprocessing batch (20\,GB, $\approx$6{,}500 sessions) for the same $\approx$1.6\,s attacker cost, yielding a $\approx$487$\times$ asymmetry.
Given the small number of active wallets on the current network, an adversary can continuously force preprocessing regeneration across all active wallets at a cost far below the aggregate recovery time.
}}

\medskip
Next, Section~\ref{sec:mpc_dark_pools} compares Renegade with prior MPC-based dark pools, distinguishing architectural limitations from Renegade-specific design choices.

\medskip

\section{Renegade in the Context of Prior MPC-Based Dark Pools}
\label{sec:mpc_dark_pools}
\textcolor{black}{Renegade is, to our knowledge, the only production MPC-based dark pool that combines permissionless peer-to-peer order discovery with non-custodial on-chain settlement. Prior MPC-based protocols study private matching in more restricted settings (trusted or private order intake, permissioned or server-mediated MPC committees, or institutional intermediaries) and remain proof-of-concept rather than live deployments. On-chain settlement is not itself unique to Renegade: P2DEX~\cite{baum2021p2dex} combines permissionless private MPC-based matching with on-chain settlement, and Rialto~\cite{govindarajan2022privacy} settles on-chain through a permissioned blockchain with broker-mediated execution. What distinguishes Renegade is the combination of permissionless P2P discovery and production deployment: it is the only system in which counterparty discovery, MPC-based matching, and on-chain settlement all operate together under open, adversarial participation.}

Table~\ref{tab:design-comparison}  summarizes representative MPC-based dark pools. Server-based protocols~\cite{cartlidge2019mpc,cartlidge2021multi,da2022kicking,da2022all,chiang2023correlated} rely on trusted or permissioned MPC committees, while institutional deployments such as Prime Match~\cite{polychroniadou2023prime} rely on trusted intermediaries. P2DEX~\cite{baum2021p2dex} uses private order intake with permissionless deployment and on-chain settlement, whereas Rialto~\cite{govindarajan2022privacy} uses broker-mediated MPC over a permissioned blockchain. In contrast, Renegade combines permissionless P2P order discovery, decentralized MPC execution, and non-custodial on-chain settlement in a production system.

\begin{table}[!t]
\centering
\footnotesize
\setlength{\tabcolsep}{4pt}
\renewcommand{\arraystretch}{1.2}
\begin{tabularx}{\columnwidth}{@{}
  >{\hsize=1.4\hsize\raggedright\arraybackslash}X %
  l                                               %
  l                                               %
  >{\hsize=0.6\hsize\raggedright\arraybackslash}X %
  c                                               %
  l                                               %
  @{}}
\hline
\textbf{System} & \textbf{Setting} & \textbf{Discovery} &
\textbf{Matching} & \textbf{Settle} & \textbf{Status} \\
\hline
\rowcolor{yellow!20}
\textbf{Renegade}~\cite{renegade}
& Pless & P2P & 2-PC$^{\dagger}$ & on & Deployed \\
All-for-One~\cite{da2022all}
& Perm & Centr. & $n$-PC & off & PoC \\
MPC-Dark-Side~\cite{cartlidge2019mpc}
& Perm & Centr. & 2--3 srv & off & PoC \\
Correlated-DP~\cite{chiang2023correlated}
& Perm & Curator & MPC$+$DP & off & PoC \\
Turquoise Plato~\cite{cartlidge2021multi}
& Inst & None & Mul-eng & off & PoC \\
Prime Match~\cite{polychroniadou2023prime}
& Inst & Bank & 2-PC & off & Deployed \\
Kick-the-Bucket~\cite{da2022kicking}
& Perm & Intake & 3 srv & off & PoC \\
Rialto~\cite{govindarajan2022privacy}
& Perm & Brokers & $n$-PC & on & PoC \\
P2DEX~\cite{baum2021p2dex}
& Pless
& Intake
& $n$-PC
& on
& PoC \\
\hline
\end{tabularx}

\caption{Architectural comparison of MPC-based dark pools.
Each row represents a system; columns show its permission \emph{Setting}, \emph{Discovery} mechanism, \emph{Matching} configuration, \emph{Settlement} layer, and deployment \emph{Status}.
\emph{Setting:} Pless = permissionless, Perm = permissioned, Inst = institutional.
\emph{Discovery:} P2P = peer-to-peer, Centr. = centralized, Curator = trusted curator, None = no private discovery, Bank = bank-mediated, Intake = private intake.
\emph{Matching:} $n$-PC = $n$-party MPC, srv = servers, DP = differential privacy, $\dagger$ = malicious security.
\emph{Settlement:} on = on-chain, off = off-chain.
\emph{Status:} Deployed = in production, PoC = proof-of-concept.}
\vspace{-2em}
\label{tab:design-comparison}
\end{table}

These architectural differences determine the scope of our findings. 
\AtkZero{} (pre-MPC information leakage) depends on Renegade's permissionless discovery protocol and therefore does not arise in systems with trusted order submission. \AtkTwo{} (on-chain/off-chain correlation) is specific to protocols exposing public on-chain settlement state. In contrast, \AtkOne{} (output fairness) follows from abort-based MPC without fair output delivery, while \AtkThree{} (invalid-input griefing) demonstrates the need for authenticated input validation before MPC execution. These latter issues are not unique to Renegade and should be considered by any MPC-based trading protocol employing similar assumptions. Thus, several of the security properties and resulting vulnerabilities identified in this paper extend beyond a single implementation.

Next, we discuss possible defenses that mitigate these limitations while preserving Renegade's privacy guarantees.

\section{Discussion and Mitigations}
\label{sec:mitigations}
The attacks in Section~\ref{sec:attacks} arise from three structural gaps: pre-MPC leakage during discovery, output unfairness at the MPC conclusion, and execution asymmetries enabling resource exhaustion. Closing all three while preserving low latency, trust minimization, and strong privacy remains an open problem; we therefore address each gap individually, with defenses that raise the cost or reduce the effectiveness of the corresponding attack class.

\noindent\textbf{Privacy Protection (\AtkZero{} and \AtkTwo{}).}
The ideal defense---running all communication at the MPC level, with no handshake---is impractical: MPC is too expensive to attempt with every potential counterparty, and any cheaper pre-filter necessarily leaks some order intent. The mitigations below instead raise the cost of exploiting that leakage.
\begin{itemize}
   \item \textbf{Private Discovery (PSI):} A Private Set Intersection protocol forces an adversary to commit to a set of pairs in advance, raising the cost of a full orderbook scan. Singleton queries still probe one pair at a time, but at higher per-query cost.
    
    \item \textbf{Noise and Dummy Orders:} Injecting dummy orders obscures real trading intent, increasing the number of sessions needed to extract a reliable signal. This raises the cost of probing without eliminating it.
    
    \item \textbf{Batching and Temporal Obfuscation:} Aggregating wallet updates into a single on-chain transaction and adding timing delays breaks the one-to-one mapping between on-chain events and trades, blunting the correlation attack.%

\end{itemize}

\noindent\textbf{Cryptographic Fairness (\AtkOne{}).}
The ideal defense against selective abort is simultaneous output delivery, so neither party can abort after observing the outcome. Achieving this without a trusted third party requires a commitment scheme with guaranteed delivery or a fair exchange protocol, both hard to realize in Renegade's trust model. The mitigations below reduce the informational advantage gained by aborting.

\begin{itemize}
    \item \textbf{Gradual Release:} Replacing the single opening with a multi-round revelation protocol~\cite{even1985randomized} delivers the output incrementally, so no single abort reveals the full match result---at the cost of added round complexity and latency.

   \item \textbf{Financial Penalties (Slashing):} An on-chain deposit before MPC slashed on abort~\cite{kumaresan2016improvements} raises the economic cost of exercising the free option, but effective slashing needs reliable abort attribution, which MPC-with-abort lacks.
    
    \item \textbf{Trusted Delivery:}  Routing output opening through a TEE or a threshold of relay nodes removes any party's unilateral control over delivery, but reintroduces a hardware or multi-party trust assumption.
\end{itemize}

\noindent\textbf{Operational Defenses (\AtkThree{}).}
The ideal defense makes malicious aborts costly and attributable, so an adversary cannot repeatedly force preprocessing regeneration for free. The obstacle is the attribution problem mentioned above: under MPC-with-abort, a deliverate abort is indistinguishable from a network failure. The defenses below instead raise the marginal cost of sustained griefing.

\begin{itemize}
    \item \textbf{Pre-MPC Verification:} A lightweight commitment-consistency check before the full MPC rejects provably inconsistent inputs, filtering the cheapest attack variants; inconsistencies that pass the check can still abort, so this reduces rather than eliminates griefing.

    \item \textbf{Stake and Rate Limiting:} A small deposit to initiate a session plus per-peer rate limits raise the capital and operational cost of sustained attacks. Absent abort attribution the deposit burdens all parties, and rate limits are evadable by rotating identities---but each still raises the attacker's marginal cost per session.

    \item \textbf{Concurrent Match Attempts:} Honest users can participate in multiple matches simultaneously so one griefed session does not block a wallet. This increases computational overhead and may degrade throughput, but ensures eventual execution.
\end{itemize}

The privacy- and fairness-oriented defenses---such as gradual release, batching, and time-locked deposits---improve these properties only by introducing execution delays, at a direct monetary cost in greater price exposure and missed trades.

\section{Related Work}
\label{sec:related_work}
This work relates to three lines of research.

\noindent\textbf{Private Exchange Architectures.}
Privacy-preserving DEXes rely on different primitives with distinct trade-offs. TEE-based systems (e.g., Tristero~\cite{tristero}, Silhouette~\cite{silhouette}) achieve low latency but require hardware trust. ZKP-based designs (e.g., Penumbra~\cite{penumbra}) provide strong state privacy but require external coordination for matching. FHE-based approaches~\cite{singularity} remain impractical for low-latency execution. MPC-based designs enable direct off-chain matching via secret sharing but do not inherently address adversarial economic behavior, leaving strong cryptographic guarantees orthogonal to exploitability in deployed systems.

\noindent\textbf{Fairness and Abort in MPC.}
Fairness is impossible without an honest majority~\cite{cleve1986limits}. In two-party MPC with abort, an adversary may learn the output and terminate before the honest party receives it~\cite{cartlidge2019mpc, da2022kicking}. Financial penalties~\cite{kumaresan2016improvements} and gradual release~\cite{andrychowicz2014fair} mitigate this, but are not enforced in deployed systems, leaving selective-abort strategies economically viable in practice.

\noindent\textbf{Information Leakage and Griefing.}
While MEV on transparent ledgers is well-studied~\cite{daian2020flash, heimbach2023ethereum}, dark pools remain vulnerable to metadata side channels: P2P handshake patterns and on-chain state updates can enable reconstruction of private order flow even when cryptographic protocols are followed correctly. Griefing attacks imposing disproportionate cost are well known in blockchain settings~\cite{buterin2018discouragement}; in MPC dark pools, invalid inputs can force aborts and waste non-reusable preprocessing, creating asymmetries that enable efficient denial of service.

Our work bridges these lines: we formally analyze a deployed MPC dark pool and show that structural vulnerabilities in each category are concretely exploitable against the live network, with cost asymmetries measured empirically.

\section{Ethical Considerations}

This paper involves empirical measurement of a system that was deployed on mainnet at the time of our measurements and analysis of protocol-level vulnerabilities. We address the relevant ethical considerations below.

\medskip
\noindent\textit{Data collection.}
All on-chain data analyzed in this work is publicly available on the Base blockchain. No private or privileged data sources were used. While our analysis links relay addresses to wallet addresses, these associations are directly derivable from plain transaction metadata and do not require any information beyond what is visible to any blockchain observer.

\medskip
\noindent\textit{P2P network measurement.}
Our P2P measurements were conducted by joining the public Renegade gossip network as a standard participant. The \texttt{/v0/network} API endpoint queried is publicly accessible without authentication, and crawling was performed using the Nebula libp2p crawler in read-only mode. Neither activity modifies network state or disrupts service to other participants.

\medskip
\noindent\textit{Attack implementation.}
The attacks described in Section~\ref{sec:attacks} are formal protocol-level analyses and were not implemented or executed. Section~\ref{sec:benchmarks} benchmarks the underlying protocol components (handshake proof generation, MPC session cost, and ZKP circuit evaluation) on our own infrastructure to establish cost parameters; no attack was directed at the live Renegade network or at any real users.

\medskip
\noindent\textit{Responsible disclosure.}
On August 12, 2026, before making this work public, we notified the Renegade development team of the vulnerabilities identified in this work and shared a draft of the full manuscript containing the technical details of our analysis, giving them time to review the findings. We did not execute the attacks against the live network or any real users. The developers acknowledged our report, raised no objection to publication, and confirmed that some of the concerns we describe were known to them. In particular, they indicated that selective abort could be addressed through mechanisms such as peer scoring or exponential backoff combined with time-lock encryption, in line with the mitigations we discuss in Section~\ref{sec:mitigations}. They further confirmed that, in the deployment, the relayer network remained restricted to nodes operated by the team, consistent with our network measurements (Section~\ref{sec:empirical}). Our attacks therefore target the protocol as specified, in its intended permissionless setting, rather than a particular operational configuration.

\bibliographystyle{IEEEtran}
\bibliography{references}

\appendices

\section{Open Science}
We provide all artifacts necessary to reproduce the results presented in this paper in the following public repository:
\href{https://zenodo.org/records/22810597}
{https://zenodo.org/records/22810597}
The artifact package includes:

\medskip
\noindent\textit{On-chain dataset.} The complete transaction-level dataset collected from the Renegade Base deployment (Section~\ref{sec:empirical}), including all transaction hashes, protocol events, and ERC-20 transfer logs retrieved via Alchemy~\cite{alchemy}.

\medskip
\noindent\textit{Data collection and analysis scripts.} All scripts used to collect on-chain data and reproduce the figures and tables in Section~\ref{sec:empirical}, including the two-stage Alchemy pipeline for event and transaction retrieval.

\medskip
\noindent\textit{Benchmark code.} All code used to measure handshake proof generation, MPC session cost, and ZKP circuit evaluation (Section~\ref{sec:benchmarks}), implemented in Rust using the \texttt{ark-mpc} framework and PLONK-based zkSNARK circuits.

\medskip
\noindent\textit{P2P network measurement.} Current Renegade mainnet topology is publicly accessible via the API endpoint documented in Section~\ref{sec:empirical}, and the public Nebula libp2p crawler~\cite{nebula}.

\section{Generative AI Disclosure}

This work used large language model (LLM) assistants in a supporting role. Claude (Anthropic) was used for prose rewriting, grammar correction, and cross-section consistency, and for assistance with data analysis workflows and plotting scripts. ChatGPT (OpenAI) was used for occasional small-scale text editing. In all cases, the authors verified and took responsibility for all outputs. All technical content, analyses, formal arguments, and conclusions are original work by the authors; no LLM was used to generate claims, proofs, or experimental results.

\section{Details on Renegade Protocol}
\label{appendix:trade_lifecycle}
\subsection{Trade Lifecycle: Full Description}
The trade lifecycle in Renegade \cite{renegade} consists of two distinct operational domains: \emph{on-chain state transitions}, where users interact with the smart contract to manage wallet state, and \emph{off-chain matching}, where relayers coordinate to discover and execute trades using secure computation. Figure~\ref{fig:trade_cycle_part1} and ~\ref{fig:tradecycle_part2} illustrate the complete process from wallet initialization to settlement. 

A trade in Renegade proceeds through the following stages:
(i) wallet creation,
(ii) wallet update,
(iii) delegation to a relayer,
(iv) counterparty discovery,
v) handshake,
(vi) secure matching via MPC and proof generation, vii) on-chain submission, and
(viii) trade settlement.

The detailed workflow is as follows:

\medskip
\noindent\emph{\textbf{Step 1: Wallet Creation.}}
The wallet state is defined as 
\[W := (B, O, F, K, r)
\]
(Section~\ref{sec:trade_life_cycle}). The user initializes $W$ with zero balances, orders, and fees (i.e., $B, O, F = 0$), a set of access keys $K$, and randomness $r$. The user computes a commitment $C(W)$ and generates a proof $\Pi_{\mathsf{create}}$ that $C(W)$ corresponds to a correctly initialized wallet (see~\cite{renegade}). The pair $(C(W), \Pi_{\mathsf{create}})$ is submitted to the smart contract, which verifies the proof and inserts $C(W)$ into the global commitment Merkle tree.
The trade flow begins at \Circled{1}  in Figure~\ref{fig:trade_cycle_part1}).

\medskip
\noindent\emph{\textbf{Step 2: Wallet Update.}}
To modify the wallet state (e.g., deposits, withdrawals, or order updates), the user transitions from $W$ to a new state $W'$ with commitment $C(W')$. This update invalidates the previous state by revealing the nullifiers
\[
\{N^{\textsf{wallet-spend}}(W),\; N^{\textsf{wallet-match}}(W)\}.
\]

The user generates a proof $\Pi_{\mathsf{update}}$ that certifies: (i) $C(W)$ is a valid commitment in the global tree, (ii) both $C(W)$ and $C(W')$ are well-formed, (iii) the nullifiers are correctly derived and unused, and (iv) the transition from $W$ to $W'$ is valid. The user submits $(C(W'), \Pi_{\mathsf{update}}, \mathrm{pk}^{\textsf{settle}})$ to the smart contract. If verification succeeds and the nullifiers are unused, the contract appends $C(W')$ to the commitment tree and marks the corresponding nullifiers spent (see \Circled{2} in Figure~\ref{fig:trade_cycle_part1}.)

\medskip
\noindent\emph{\textbf{Step 3: Delegation to Relayer.}}
After the updated wallet state is recorded on-chain, the user delegates matching authority to a relayer by securely sharing $(W', C(W'))$ and a matching key $\mathrm{sk}^{\mathsf{match}}$ over a private channel. The relayer verifies consistency between $W'$ and $C(W')$ and maintains a local view of the wallet state. Using the delegated key, it can participate in order matching and assist in settlement, but cannot modify the wallet state as it does not possess the root secret key $\mathrm{sk}^{\mathsf{root}}$ (see \Circled{3} in Figure~\ref{fig:trade_cycle_part1}).

\begin{figure*}[!b]
    \centering
    \begin{pchstack}[boxed]
        \resizebox{.94\textwidth}{!}{%
            \procedure{\textbf{MPC of Match Tuple and collaborative ZKP Generation}}{
                \textbf{User 1} \< \< \< \textbf{User 2} \\
                \eqnmarkbox[orange]{}{[o_1]}, \eqnmarkbox[purple]{}{[b_1]},  \eqnmarkbox[green]{}{[f_1]} \eqnmarkbox[brown]{}{[r_1]} \gets \eqnmarkbox[orange]{}{o}, \eqnmarkbox[purple]{}{b}, \eqnmarkbox[green]{}{f}, \eqnmarkbox[brown]{}{r} \< \< \< \eqnmarkbox[orange]{}{[o_2]}, \eqnmarkbox[purple]{}{[b_2]},  \eqnmarkbox[green]{}{[f_2]} \eqnmarkbox[brown]{}{[r_2]} \gets \eqnmarkbox[orange]{}{o}, \eqnmarkbox[purple]{}{b}, \eqnmarkbox[green]{}{f}, \eqnmarkbox[brown]{}{r} \\
                \< \sendmessageleft*[4em]{\eqnmarkbox[orange]{}{[o_2]}, \eqnmarkbox[purple]{}{[b_2]},  \eqnmarkbox[green]{}{[f_2]} \eqnmarkbox[brown]{}{[r_2]}} \<  \sendmessageright*[4em]{\eqnmarkbox[orange]{}{[o_1]}, \eqnmarkbox[purple]{}{[b_1]},  \eqnmarkbox[green]{}{[f_1]} \eqnmarkbox[brown]{}{[r_1]}} \< \\
                \left[ M \right]  := Match( \eqnmarkbox[orange]{}{[o_i]}, \eqnmarkbox[purple]{}{[b_i]},  \eqnmarkbox[green]{}{[f_i]} ) \< \< \< [M] := Match( \eqnmarkbox[orange]{}{[o_i]}, \eqnmarkbox[purple]{}{[b_i]},  \eqnmarkbox[green]{}{[f_i]} )\\
                \left[ H_M \right] := H(M \mathbin\| H_{r_1} +  H_{r_2}) \< \< \< \left[ H_M \right] := H(M \mathbin\| H_{r_1} +  H_{r_2})\\
                \left[ \pi_{M1} \right] := ZKP(M) \< \< \< \left[ \pi_{M2} \right] := ZKP(M) \\
                \pi_{M} := Open( \left[ \pi_{M1} \right],  \left[ \pi_{M2} \right] ) \< \sendmessageright*[4em]{\left[ \pi_{M1} \right]} \< \sendmessageleft*[4em]{\left[ \pi_{M2} \right]} \< \pi_{M} := Open( \left[ \pi_{M1} \right],  \left[ \pi_{M2} \right] ) \\
                Check(\pi_{M}) \overset{?}= pass \< \< \< Check(\pi_{M}) \overset{?}= pass  \\
                M = Open( \left[ M \right]_{1}, \left[ M \right]_{2} ) \< \sendmessageright*[4em]{\left[ M \right]_{1}} \< \sendmessageleft*[4em]{\left[ M \right]_2}
                \< M := Open( \left[ M \right]_{1},  \left[ M \right]_{2} )
             }
        }
    \end{pchstack}
    \caption{MPC of Match Tuple and Collaborative Proof generation in Renegade Protocol. The process proceeds as follows: 
        (1) \textbf{Secret Sharing}: Both users decompose their private state ($o_i, b_i, f_i$) into Shamir secret shares $[ \cdot ]$ and exchange them. 
        (2) \textbf{MPC Execution}: Relayers compute secret shares of the match tuple $[M]$ and a hiding commitment $[H_M]$ without revealing the underlying data. 
        (3) \textbf{Collaborative ZKP}: Both parties locally generate shares of a zero-knowledge proof ($\pi_{Mi}$), then exchange and open them to verify the integrity of the MPC execution ($Check(\pi_M)$). 
        (4) \textbf{Opening Phase}: The final stage requires exchanging shares of the match result $[M]_i$.}
    \label{fig:mpc_match_tuple}
\end{figure*}

\medskip
\noindent\emph{\textbf{Step 4: Counterparty Discovery.}}
The relayer gossips its order intent over the P2P network (see Section~\ref{sec:network}). A relayer with a potentially compatible order responds via a direct channel, establishing a candidate counterparty.

\medskip
\noindent\emph{\textbf{Step 5: Handshake with the counterparty.}}
To construct a handshake payload, relayer $R_1$ selects an order $o \in O$, balance $b \in B$, and fee tuple $f$ according to its policy. It samples randomness $r$ and computes hiding commitments $(H_o, H_b, H_f, H_r)$. It then generates a proof $\Pi_{\mathsf{commit}}$ certifying that the commitments are well-formed and correspond to an unspent wallet $W'$ containing $(o, b, f)$ with sufficient balance to cover fees. 
$R_1$ forms the handshake payload
\[
H_1 =
\big(
\Pi_{\mathsf{commit}},
N^{\textsf{wallet-match}}(W'),
pk^{\textsf{settle}},
H_o, H_b, H_f, H_r
\big),
\]
which binds it to the selected inputs while preserving privacy, and sends $H_1$ to a counterparty $R_2$. Upon receipt, $R_2$ verifies the proof, checks that the nullifier $N^{\textsf{wallet-match}}(W')$ is unused, and ensures the handshake has not been previously marked as failed.

If valid, $R_2$ responds with its own handshake payload $H_2$ over a secure channel (see \Circled{5} in Figure~\ref{fig:tradecycle_part2}). After verifying $H_2$, $R_1$ proceeds to MPC-based matching.

\medskip

\noindent\emph{\textbf{Step 6: Secure Matching via MPC and Proof Generation.}}

\noindent\emph{Matching.}
Renegade performs order matching using a maliciously secure SPDZ-style two-party computation between relayers $(R_1, R_2)$ established during the handshake. The protocol evaluates the matching function over secret-shared inputs (orders, balances, and fees) while keeping all intermediate values private, producing a match tuple along with validity indicators (see Appendix~\ref{app:mpc}).

\noindent\emph{Proof generation.}
After executing the circuit, $R_1$ and $R_2$ jointly generate a zero-knowledge proof $\Pi_{\mathsf{match}}$ within the MPC framework. This binds the output $M$ to the public commitments via $H_M$ (see \Circled{6} in Figure~\ref{fig:tradecycle_part2}).

\noindent\emph{Opening phase.}
The relayers then perform an authenticated opening of the MPC outputs by exchanging shares. Integrity is enforced via SPDZ MAC verification, which detects any deviation. Upon successful verification, both parties obtain $M$, $H_M$, consistency flags $(Z_1, Z_2)$, and the proof $\Pi_{\mathsf{match}}$.

\medskip
\noindent\emph{\textbf{Step 7: On-Chain Submission.}}

\noindent\emph{Match note construction and encryption.}
If the output indicates a valid match, each relayer derives a match note from $M$ encoding the trade parameters (assets, amounts, direction, and metadata). The relayers construct notes $N_1$ and $N_2$ and encrypt them as
$E_{pk_{R_1}^{\textsf{settle}}}(N_1)$ and $E_{pk_{R_2}^{\textsf{settle}}}(N_2)$. Each relayer also generates a proof $\Pi_{\mathsf{enc}}$ to ensure correct formation and encryption of the notes, consistent with $M$ and $H_M$ (see \Circled{7a} in Figure~\ref{fig:tradecycle_part2}).

\noindent\emph{On-chain submission.}
$R_1$ submits $\Pi_{\mathsf{commit}}^{R_1}$, $\Pi_{\mathsf{commit}}^{R_2}$, $\Pi_{\mathsf{match}}$, and $\Pi_{\mathsf{enc}}$, along with the corresponding public inputs, to the smart contract. Upon verification, the contract marks the wallet-match nullifiers as spent and inserts the encrypted notes into the commitment tree. This step, referred to as \emph{encumbering}, completes the joint execution phase. Settlement is then performed independently by each party (see \Circled{7b} in Figure~\ref{fig:tradecycle_part2}).

\medskip
\noindent\emph{\textbf{Step 8: Trade Settlement.}}
To settle a trade, a relayer (e.g., $R_1$) retrieves its encrypted note and decrypts it using its settlement key $sk^{\textsf{settle}}$. It then constructs an updated wallet state $W''$, preserving the fee parameters $f$ and keys $K$ from $W'$.
The relayer generates a proof $\Pi_{\mathsf{settle}}$ and submits it to the smart contract. Upon verification, the contract marks the \emph{wallet-spend} and \emph{note-redeem} nullifiers as spent, completing settlement for that relayer.

Either of the parties submit valid settlement proofs, all relevant nullifiers (wallet-match, wallet-spend, and note-redeem) are consumed, preventing further use of the previous wallet state $W'$ or the corresponding notes (see \Circled{8} in Figure~\ref{fig:tradecycle_part2}).

\subsection{MPC-based Matching in Renegade.}
\label{app:mpc}

Renegade performs order matching via maliciously secure SPDZ-based two-party computation followed by collaborative ZKP proof generation before revealing the match tuple, as shown in Figure~\ref{fig:mpc_match_tuple}.
Renegade's MPC flow comprises three phases: (A) preprocessing generates Beaver triples and global MAC key shares; (B) computation evaluates the matching circuit over secret-shared inputs (order, balance, fees, randomness) to produce match tuple shares; and (C) opening reveals the tuple by exchanging shares.
  
\begin{enumerate}
\item \noindent\underline{\textbf{Pre-processing}.}
Generates input-independent Beaver triples $([a],[b],[c])$ with $c = a \cdot b$ and MAC key shares $[\alpha]$ for secure computation.

\medskip
\item \noindent\underline{\textbf{Computation}.}
The functionality evaluates the circuit $\mathcal{C}_{\mathsf{match}}$ on authenticated shares. The circuit takes as public inputs the handshake commitments and auxiliary market data, and computes the match over secret-shared inputs without revealing intermediate values.
\begin{itemize}
\item \noindent\textbf{Public Inputs.}
\begin{itemize}
\item Handshake commitments:
$H_o^{(i)}, H_b^{(i)}, H_f^{(i)}, H_r^{(i)}$ for $i \in \{1,2\}$
\item Market data: midpoint price $P_{\mathsf{mid}}$, protocol parameters
\end{itemize}

\item \noindent \textbf{Private Inputs (secret-shared).}
Each party $R_i$ holds:
\[
x_i = (o_i, b_i, f_i, r_i)
\]
where: $o_i = (m_i^{\mathrm{in}}, m_i^{\mathrm{out}}, v_i, p_i, t_i)$ order, $b_i$ = balance vector, $f_i$ = fee parameters, $r_i$ = randomness used in commitments.

Each secret-shared value $[x]$ is associated with a MAC share under a global secret key $\alpha$, enabling integrity checks during opening. 

Inputs are encoded as additive shares over $\mathbb{F}_p$: $
x_i = [x_i]_1 + [x_i]_2 $
with corresponding MAC shares under global key $\alpha$: $\gamma_i = \alpha \cdot x_i$
\end{itemize}

\medskip
The circuit performs these checks over authenticated shares:
\begin{itemize}
\item \noindent \textbf{Commitment consistency ($Z_2$).}
Recomputes commitments from private inputs and verifies against public handshake values.
computes \[\hat{H}_{\ast}^{(i)} = \mathsf{Com}(x_i, r_i)\] for $x_i \in \{o_i, b_i, f_i\}$ and verifies $\hat{H}_{\ast}^{(i)} = H_{\ast}^{(i)}$ for $i \in \{1,2\}$,
yielding
\[
Z_2 = \land_{i \in \{1,2\}}
\left(\hat{H}_{\ast}^{(i)} = H_{\ast}^{(i)}\right).
\]

\item \noindent \textbf{Order compatibility.}
Ensures $Z_{\mathrm{ord}}= (m_1^{\mathrm{out}}  = m_2^{\mathrm{in}})$ and $(m_1^{\mathrm{in}} = m_2^{\mathrm{out}})$ with price/expiration constraints.

\item \noindent \textbf{Match volume.}
Computes executable volume $\hat{v} = \min(v_1, v_2)$ and direction $\hat{d} \in \{0,1\}$.

\item \textbf{Balance/fees.} Verifies $b_i \geq \hat{v} + \mathrm{fee}(f_i)$ and computes $f_i' = \mathrm{Fee\_fun}(f_i, \hat{v})$.
 
\item \textbf{Validity ($Z_1$).} Sets $Z_1 = Z_{\mathrm{ord}} \wedge (\hat{v} > 0)$.

\item \textbf{Output ($M$).} Constructs 
\[M = (\hat{m}_1, \hat{m}_2, \hat{v}_1, \hat{v}_2, \hat{d}, f_1, f_2)\] with commitment $H_M = \mathsf{Com}(M)$.

\item \textbf{Mask:} Applies $M \leftarrow Z_1 \cdot Z_2 \cdot M$.
\end{itemize}

All operations are performed over authenticated shares using Beaver triples.

\medskip
\item \noindent\underline{\textbf{Opening.}}
Parties exchange shares to reconstruct and MAC-verify the match tuple $(M, H_M, Z_1, Z_2)$.

\begin{itemize}
\item exchanging shares
\item verifying MACs: $\gamma = \alpha \cdot x$
\end{itemize}

If verification fails, output $\perp$. Otherwise, protocol outputs
$(M, H_M, Z_1, Z_2)$.
\end{enumerate}

\medskip
\subsection{ZKP Circuits in Renegade}
Table~\ref{tab:zkp-circuits} reports the proving and verification times for all Plonk circuits used in Renegade.

\newpage

\begin{table*}[htb!]
\centering
\small
\resizebox{0.80\textwidth}{!}{%
\begin{tabular}{llrll}
\hline
\textbf{Phase} & \textbf{Circuit} & \textbf{Gates} & \textbf{Prove Time} & \textbf{Verify Time} \\
\hline
\multirow{4}{*}{\shortstack[l]{Pre-MPC\\(Handshake,\\$\Pi_{\mathrm{commit}}$)}}
  & Intent \& Balance           & 12{,}345 & $2888 \pm 660$\,ms & $3.6 \pm 0.3$\,ms \\
  & Intent \& Balance (1st fill) & 12{,}978 & $3012 \pm 778$\,ms & $4.6 \pm 0.6$\,ms \\
  & Intent Only                 &  5{,}577 & $2226 \pm 830$\,ms & $4.1 \pm 0.3$\,ms \\
  & Intent Only (1st fill)      &  1{,}512 & $ 465 \pm  29$\,ms & $5.6 \pm 0.3$\,ms \\
\hline
\multirow{7}{*}{\shortstack[l]{Post-MPC\\(Settlement)}}
  & Intent \& Balance --- Public settlement   & 1{,}116 & $415 \pm  24$\,ms & $ 6.3 \pm 1.2$\,ms \\
  & Intent \& Balance --- Private settlement  & 2{,}779 & $763 \pm  75$\,ms & $ 5.7 \pm 0.6$\,ms \\
  & Intent \& Balance --- Bounded settlement  &   701   & $247 \pm  26$\,ms & $ 4.6 \pm 0.5$\,ms \\
  & Intent Only --- Public settlement         &   372   & $291 \pm  20$\,ms & $11.8 \pm 4.4$\,ms \\
  & Intent Only --- Bounded settlement        &   322   & $155 \pm  40$\,ms & $ 8.2 \pm 2.1$\,ms \\
  & Output Balance                            & 6{,}771 & $1434 \pm 290$\,ms & $3.4 \pm 0.3$\,ms \\
  & New Output Balance                        & 11{,}883 & $2918 \pm 307$\,ms & $3.4 \pm 0.4$\,ms \\
\hline

\multirow{9}{*}{\shortstack[l]{Wallet\\Management}}
  & Valid Balance Create              & --- & $1553 \pm 504$\,ms & $5.3 \pm 0.2$\,ms \\
  & Valid Deposit                     & --- & $1395 \pm  72$\,ms & $3.6 \pm 0.4$\,ms \\
  & Valid Withdrawal                  & --- & $1414 \pm  41$\,ms & $4.4 \pm 0.4$\,ms \\
  & Valid Order Cancellation          & --- & $1396 \pm  84$\,ms & $4.5 \pm 0.6$\,ms \\
  & Valid Note Redemption             & --- & $ 759 \pm 104$\,ms & $4.6 \pm 0.8$\,ms \\
  & Valid Private Relayer Fee Payment & --- & $1418 \pm 142$\,ms & $4.2 \pm 0.6$\,ms \\
  & Valid Public Relayer Fee Payment  & --- & $1415 \pm  47$\,ms & $3.4 \pm 0.4$\,ms \\
  & Valid Private Protocol Fee Payment & --- & $2915 \pm 436$\,ms & $3.6 \pm 0.3$\,ms \\
  & Valid Public Protocol Fee Payment  & --- & $1416 \pm  77$\,ms & $4.3 \pm 0.4$\,ms \\
\hline
\end{tabular}}
\caption{Renegade ZKP circuits classified by protocol phase. Prover and verifier times represent
the mean $\pm$ standard deviation (wall-clock time, single-prover, 10 samples). Gates
denote PLONK constraint counts; all proofs are 769\,B. Wallet-management circuits run
independently of the MPC matching flow. Here, \emph{Intent \& Balance} validates intent
presence in the Merkle tree with a paired balance account, while \emph{Intent Only} verifies only intent presence in the Merkle tree.}
\label{tab:zkp-circuits}
\end{table*}

\end{document}